\newtheorem{theorem}{Theorem}
\newtheorem{lemma}{Lemma}
\newcommand{\dps}{\displaystyle}
\newcommand{\ii}{\infty}
\newcommand\R{{\ensuremath {\mathbb R} }}
\newcommand\1{{\ensuremath {\mathds 1} }}
\renewcommand\phi{\varphi}
\newcommand{\cI}{\mathcal{I}}
\newcommand{\cV}{\mathcal{V}}
\newcommand{\eps}{\epsilon}
\newcommand{\bx}{\mathbf{x}}
\newcommand{\bX}{\mathbf{X}}
\newcommand{\by}{\mathbf{y}}
\newcommand{\bz}{\mathbf{z}}
\newcommand{\bu}{\mathbf{u}}
\newcommand{\br}{\mathbf{r}}
\newcommand{\bk}{\mathbf{k}}
\newcommand{\bv}{\mathbf{v}}
\renewcommand{\epsilon}{\varepsilon}
\DeclareMathOperator{\tr}{{\rm Tr}}
\renewcommand{\geq}{\geqslant}
\renewcommand{\leq}{\leqslant}
\renewcommand{\tilde}{\widetilde}
\newcommand{\dx}{{\rm d}\bx}
\newcommand{\dX}{{\rm d}\bX}
\newcommand{\dy}{{\rm d}\by}
\newcommand{\dz}{{\rm d}\bz}
\newcommand{\dk}{{\rm d}\bk}
\newcommand{\dt}{{\rm d}t}
\newtheorem*{fauxthm}{Runge-Gross' uniqueness}
\date{\today}
\begin{document}

\title[Coulomb potentials and Taylor expansions in TDDFT]{Coulomb potentials and Taylor expansions in Time-Dependent Density Functional Theory}

\author{S{\o}ren Fournais}
\affiliation{Department of Mathematics, Aarhus University, Ny Munkegade 118, DK-8000 Aarhus C, Denmark}

\author{Jonas Lampart}
\affiliation{PSL Research University \& CEREMADE (UMR CNRS 7534), University of Paris-Dauphine, Place de Lattre de Tassigny, F-75775 Paris Cedex 16, France}

\author{Mathieu Lewin}
\affiliation{CNRS \& CEREMADE (UMR CNRS 7534), University of Paris-Dauphine, Place de Lattre de Tassigny, F-75775 Paris Cedex 16, France}

\author{Thomas {\O}stergaard S{\o}rensen}
\affiliation{Mathematisches Institut, LMU Munich, Theresienstra{\ss}e 39, D-80333 Munich, Germany}

\begin{abstract}
We investigate when Taylor expansions can be used to prove the Runge-Gross Theorem, which is at the foundation of Time-Dependent Density Functional Theory (TDDFT). We start with a general analysis of the conditions for the Runge-Gross argument, especially the time-differentiability of the density. 
The latter should be questioned in the presence of singular (e.g.~Coulomb) potentials. 
Then, we show that a singular potential in a one-body operator considerably decreases the class of time-dependent external potentials to which the original argument can be applied.
A two-body singularity has an even stronger impact and an external potential is essentially incompatible with it. For the Coulomb interaction and all reasonable initial many-body states, the Taylor expansion only exists to a finite order, except for constant external potentials. Therefore, high-order Taylor expansions are not the right tool to study atoms and molecules in TDDFT.
\end{abstract}

\maketitle


Density Functional Theory (DFT) is one of the pillars of modern quantum chemistry and condensed matter physics. Its time-independent form has been studied in depth and its virtues and limitations are rather well understood~\cite{KohSha-65,Levy-79,Lieb-83b}. On the other hand, Time-Dependent Density Functional Theory (TDDFT) is more recent and certainly less well understood. 

One of the most important steps in the construction of TDDFT is a result by Runge and Gross~\cite{RunGro-84,EngDre-11} which says that the external potential $V(t,\bx)$ in a many-body system is completely determined (up to a constant) by the one-particle density $\rho(t,\bx)$. More precisely, if two potentials $V_1(t,\bx)$ and $V_2(t,\bx)$ give rise to the same one-particle density $\rho(t,\bx)$ for all times $t$, then $V_1(t,\bx)=V_2(t,\bx)+C(t)$. The applicability of this result for Coulomb systems is still under debate and our purpose in this article is to discuss in more detail the possible problems that can arise with Coulombic or more general singular potentials.

The original argument of Runge and Gross relies on the assumption that the external potentials as well as the many-body body wavefunction $\Psi(t)$ are all time-analytic, which means that they can be expanded in a \emph{convergent} Taylor series in powers of $t$. If two potentials $V_1(t,\bx)$ and $V_2(t,\bx)$ give rise to the same $\rho(t,\bx)$, it was argued that each coefficient of $t^k$ in the two Taylor series of $\nabla V_1(t,\bx)$ and $\nabla V_2(t,\bx)$ must be the same. The convergence of these power series to $\nabla V_1(t,\bx)$ and $\nabla V_2(t,\bx)$ then implies that $V_1(t,\bx)=V_2(t,\bx)+C(t)$.

It has recently been noticed~\cite{MaiTodWooBur-10,YanMaiBur-12,YanBur-13} that the time-analyticity of the wavefunction fails in many simple examples, even for time-analytic potentials. This is not surprising, since the \emph{time} regularity of the solution $\Psi(t,\bx)$ to Schrödinger's equation is known to be intimately linked to the \emph{space} regularity of the initial state $\Psi_0(\bx)$, as we will recall. If this initial state $\Psi_0(\bx)$ is not smooth enough with respect to $\bx$, then the resulting solution will not be smooth in $t$. Conversely, even for a very smooth initial state $\Psi_0(\bx)$, non-smooth (e.g. Coulomb) potentials can create singularities that propagate in time and give a Schrödinger solution that is not smooth in $t$. These examples clearly violate the assumptions used in the Runge-Gross approach and it is an open question whether potentials are characterized by the density in those cases. This is of course a fundamental problem for Coulomb potentials as in atoms and 
molecules.

Our purpose in this work is to discuss in detail how this effect arises and what role it plays for Coulomb potentials. 

An important question, rarely discussed in the literature, is to find reasonable assumptions on the potentials and initial state $\Psi_0$ under which the Runge-Gross approach, based on Taylor expansions, is applicable. As explained above, this requires a better understanding of the link between the properties of $\Psi_0(\bx)$ and those of $V(t,\bx)$ that will generate a solution $\Psi(t,\bx)$ that is analytic, or at least smooth, with respect to time. This link is naturally expressed in terms of the spectral theory of the underlying many-body Hamiltonian and it is rather subtle in the presence of singular potentials. We will introduce a precise framework taking care of these questions in Sections~\ref{sec:framework} and~\ref{sec:regularity}.

We then analyze in detail three different situations. In the case of smooth external potentials and interactions, we explain for which $\Psi_0(\bx)$ and $V(t,\bx)$ the original Runge-Gross argument works out and we give a complete, mathematically rigorous, proof in this setting in Section~\ref{sec:RG-smooth}.

We then look at singular potentials. 
In Section~\ref{sec:one-body} we treat two examples of one-particle systems that illustrate how a singularity in the potential considerably reduces the set of allowed time-dependent potentials. 
 A two-body singularity turns out to be even more delicate and an external potential is essentially incompatible with it. Indeed, except for constant external potentials, the many-body wavefunction will in general not be smooth in time for a singular interaction. 

The Coulomb interaction suffers from these difficulties. In Section~\ref{sec:Coulomb} we determine exactly how many time-derivatives make sense in this case, and derive a corresponding finite-order Runge-Gross Theorem. This is not the result that one might have hoped for, but this is certainly the best that can be obtained with an argument based on Taylor expansions.

Another important piece of TDDFT is van Leeuwen's construction~\cite{vanLeeuwen-99} of an external  time-dependent potential $V(t,\bx)$ that produces a given density $\rho(t,\bx)$. His argument is also based on power series expansions in time and thus suffers from the same regularity issues. Several recent works~\cite{RugPenBau-09,RugLee-11,PenRug-11,RugGiePenLee-12,RugPenLee-15} have aimed at justifying the Runge-Gross and van Leeuwen results avoiding the use of Taylor expansions. This is an important and interesting program which has not yet reached a completely satisfactory level of mathematical rigor. We hope that our work will clarify the situation and stimulate further research.

\section{The setting}\label{sec:framework}
In this section, we would like to discuss a general setting, based on physical considerations, for the Runge-Gross Theorem. We consider a system of $N$ particles (fermions or bosons) interacting through a potential $w(\bx-\by)$ and submitted to a fixed external potential $V_0(\bx)$. For atoms and molecules one should consider Coulomb potentials but the situation is kept general for the moment. For simplicity of the exposition, we will discard the spin variable but everything applies \emph{mutatis mutandis} if the particles have an internal degree of freedom. In a time-dependent external potential $V(t,\bx)$, the $N$-body Schrödinger Hamiltonian is
\begin{equation*}
H_{V}=\sum_{j=1}^N-\Delta_{\bx_j}+V_0(\bx_j)+V(t,\bx_j)+\sum_{1\leq j<k\leq N}w(\bx_j-\bx_k)
\end{equation*}
and the corresponding time-dependent Schrödinger evolution equation is
\begin{equation}
\begin{cases}
\dps i\tfrac{\partial}{\partial t}\Psi(t,\bX)=H_V\Psi(t,\bX)\\
\Psi(0,\bX)=\Psi_0(\bX)\,,
\end{cases}
\label{eq:TD-Schrodinger}
\end{equation}
where $\bx_j \in \R^3$ and $\bX=(\bx_1,\dots,\bx_N)$.
We are using units such that $2m=\hbar=1$, and also $4\pi \eps_0=1$. 
For atoms and molecules, $w(\bx-\by)=1/|\bx-\by|$ and the Coulomb potential $-\sum_{m=1}^MZ_m/|\bx-\br_m|$ of the nuclei can be either included in $V_0$, if the nuclei are fixed, or in $V(t,\bx)$ if they move. 
Without the potential $V(t,\bx)$, the Hamiltonian becomes
\begin{equation*}
H_{0}=\sum_{j=1}^N-\Delta_{\bx_j}+V_0(\bx_j)+\sum_{1\leq j<k\leq N}w(\bx_j-\bx_k).
\end{equation*} 
We recall that the density of the $N$-particle solution is defined by 
$$\rho(t,\bx)=N\int_{\R^{3(N-1)}}|\Psi(t,\bx,\bx_2,...,\bx_N)|^2\,\dx_2\cdots \,\dx_N.$$

For the Runge-Gross Theorem, we have to specify a class of initial conditions $\cI=\{\Psi_0\text{'s}\}$ as well as a class of considered external potentials $\cV=\{V(t,\bx)\text{'s}\}$ (defined on a given time interval $[0,t_{\rm max})$). The goal is to find the sets $\cI$ and $\cV$ for which the following theorem is valid.

\begin{fauxthm}
Let $V_j(t,\bx)$, $j=1,2$, be two potentials in the class $\cV$ and let $\Psi_j(t,\bx_1,...,\bx_N)$ be the corresponding solutions to~\eqref{eq:TD-Schrodinger}, with the same initial state $\Psi_0\in\cI$. If the associated densities satisfy $\rho_1(t,\bx)=\rho_2(t,\bx)$ for all $t\in[0,t_{\rm max})$ and all $\bx\in\R^3$, then $V_1(t,\bx)=V_2(t,\bx)+C(t)$.
\end{fauxthm}

Depending on the physical context, the set $\cV$ could be chosen very small, for instance it could consist of the Coulomb potential generated by one moving nucleus, with the position $\br(t)$ of this nucleus being the only parameter:
$$\cV=\left\{-\frac{Z}{|\bx-\br(t)|},\ \br(t)\in\R^3\right\}.$$
It could also be very large and contain a whole class of smooth functions of the time and space variables. In general it is desirable to have both sets $\cI$ and $\cV$ as large as possible. However they cannot be chosen independently. Uniqueness might hold for a very large set $\cV$ provided that $\cI$ is very small, and conversely. Furthermore, the choice of $\cI$ and $\cV$ could (and will) highly depend on the properties of the potentials $w$ and $V_0$. Since in practice the exact wavefunction $\Psi_0$ is unknown, we believe that it is appropriate to choose $\cI$ as large as possible. 

We think that the following reasonable conditions should be imposed:
\begin{enumerate}
\item[(H1)] If the initial datum $\Psi_0(\bX)$ is in $\cI$ and $V(t,\bx)$ is a potential in $\cV$, then the wavefunction $\Psi(t,\bX)$, solution to the Schrödinger equation~\eqref{eq:TD-Schrodinger}, must be in $\cI$ for all later times $t$.
\item[(H2)] The constant potentials $V(t,\bx)=C(t)$ all belong to $\cV$.
\item[(H3)] If $V(t,\bx)$ is in $\cV$, then the time-independent potential $V(t_0,\bx)$ is in $\cV$ as well, for every $t_0\in[0,t_{\rm max})$.
\item[(H4)] If $V(\bx)\in\cV$ is time-independent, then all the eigenfunctions of $H_V$ are in $\cI$.
\end{enumerate}
Condition (H1) is here to be able to apply the Runge-Gross argument at later times $t_0>0$, with new initial state $\Psi(t_0,\bX)$. There is no reason to give $t=0$ a special role. On the other hand, (H2) covers the case when no external field is applied to the system ($V=0$). By a gauge transformation, all the constants $C(t)$ must then be allowed in $\cV$. It is often assumed that the evolution before the considered time $t=0$ was governed by a fixed time-independent potential, which is only changed at positive times. It is then natural to assume, as in (H3), that time-independent potentials are in $\cV$ as well. Finally, the assumption (H4) is here to make sure that the usual time-independent DFT is covered by the theory. Indeed, in applications $\Psi_0$ will often be the ground state of the potential $V(0,\bx)$, and $V(t,\bx)$ would then be chosen in order to bring the system to another interesting state.

We remark that $\cI$ depends on the number of particles $N$, whereas we could also demand that $\cV$ is independent of $N$, if needed. This may of course result in a smaller set, depending on the situation.

To our knowledge the choice of the sets $\cV$ and $\cI$ has never been discussed in the literature. This is not a question of purely mathematical interest since the applicability of the Runge-Gross Theorem to physical systems relies on these sets not being too small. In the following we will give several examples that will clarify the respective roles of $\cI$ and $\cV$. In the next section we start by discussing for which choices of $\cI$ the wavefunction is smooth in time, for a time-independent potential $V(\bx)$, before addressing the more general case of time-dependent potentials $V(t,\bx)$.

\section{Time-regularity of solutions to the Schrödinger equation}\label{sec:regularity}

In this section we would like to recall when a solution $\Psi(t)$ of Schrödinger's equation depends smoothly on the time $t$. Based on these results, we propose an abstract definition of the sets $\cI$ and $\cV$, which provides a smooth-in-time solution $\Psi$, and for which the original Runge-Gross argument is thus applicable. These sets will be made more explicit in particular examples in the following sections. 

We start by discussing the case of time-independent potentials that must all belong to $\cV$ by Condition (H2). We then turn to the general time-dependent case.

\subsection{The time-independent case}
Let $V(\bx)$ be a time-independent potential that makes $H_V$ a self-adjoint operator (more precisely, we assume that the potential energy is infinitesimally $H_0$-bounded, see the discussion below). Then the solution to Schrödinger's equation~\eqref{eq:TD-Schrodinger} exists for any state $\Psi_0$, $\int_{\R^{3N}}|\Psi_0|^2=1$, and can be expressed in the abstract form
$$\Psi(t,\bX)=\big(e^{-itH_V}\Psi_0\big)(\bX)\,.$$
The operator $e^{-itH_V}$ is unitary and it is defined using the so-called functional calculus for self-adjoint operators~\cite{Davies}. 
In general $e^{-itH_V}\Psi_0$ is \emph{not} given by the power series $\sum_{k\geq 0}(-itH_V)^k\Psi_0/k!$. The reason is that, for most $\Psi_0$'s, $(H_V)^k\Psi_0$ does not make sense. 
Indeed, if the series converges, then $e^{t H_V}\Psi_0=\sum_{k\geq 0}(tH_V)^k\Psi_0/k!$ is also convergent and gives a solution to the ``backward heat equation''
$$\frac{\partial}{\partial t}\Psi-H_V\Psi=0$$
which is well-known to be ill-posed for most initial states $\Psi_0$.

The set of functions $\Psi$ for which $H_V\Psi$ is square-integrable is called the \emph{domain of $H_V$} and is often denoted by $D(H_V)$. For instance, if $T=-\Delta$ is the kinetic energy operator, then $\psi\in D(T)$ if and only if $\|T\psi\|^2=\int|\Delta\psi(\bx)|^2\,\dx=\int |\bk|^4|\widehat{\psi}(\bk)|^2\,\dk$ is finite, where $\widehat{\psi}(\bk)$ is the Fourier transform of $\psi(\bx)$. 

When looking at the power series, it is natural to think that $\Psi(t)=\exp(-itH_V)\Psi_0$ will be $k$ times differentiable in $t$, if and only if $(H_V)^k\Psi_0$ is square integrable, and that then
$$\frac{\partial^k}{\partial t^k}\Psi(t,\bX)=(-i)^k\left(e^{-itH_V}(H_V)^k\Psi_0\right)(\bX).$$
This intuitive result is true~\cite[Theorem VIII.7]{ReeSim1}, but one has to be very careful with how $(H_V)^k$ is defined.

Before discussing the definition of $(H_V)^k$, we remark that $t\mapsto \Psi(t,\bX)$ is time-analytic if and only if
$\sum_{k\geq0}\|(H_V)^k\Psi_0\|R^k/k!$ is finite for some $R>0$, and this is the only case for which $\Psi(t,\bX)$ can be reconstructed from its Taylor series 
\begin{equation*}
\Psi(t,\bX)=\sum_{k\geq 0}\frac{(-iH_V)^k\Psi_0(\bX)}{k!}\,t^k,
\end{equation*}
for $|t|\leq R$. Such states $\Psi_0$ are called \emph{analytic vectors of $H_V$}, a concept that was introduced by E.~Nelson in~\cite{Nelson-59} and played an important role in quantum field theory.

The meaning of the operator $(H_V)^k$, from now on denoted $H_V^k$, is again determined by the functional calculus. In more practical terms, $H_V^k\Psi_0$ is defined by \emph{recursively} calculating $H_V \Psi_0$, $H_V(H_V\Psi_0)$, etc., which must all be square-integrable functions. The domain of $H_V^k$ is, therefore,
\begin{multline}
D(H_V^k)=\bigg\{\Psi\in D(H_V)\ :\ H_V\Psi\in D(H_V),\\ H_V(H_V\Psi)\in D(H_V),\ \cdots,\ H_V^{k-1}\Psi\in D(H_V)\bigg\}.
\label{eq:domain_H_k}
\end{multline}
The space $D(H_V^k)$ can be shown to be invariant under the flow $\exp(-itH_V)$. It contains all the eigenfunctions of $H_V$, since $H_V\Psi=\lambda\Psi\in D(H_V)$.

The operator $H_V^k$ cannot be fully understood without identifying precisely its domain $D(H_V^k)$. But computing the domain $D(H_V^k)$ can sometimes be a rather difficult task. For instance, for the hydrogen atom which will be considered in more detail in Section~\ref{sec:one-body}, the domain $D(h)$ of $h=-\Delta-1/|\bx|$ is the same as $D(-\Delta)$. However, for sufficiently large $k$ the domain $D(h^k)$ becomes different from $D(-\Delta^k)$. It contains additional boundary conditions on the derivatives of $\psi$ at $\bx=0$. After applying $h$ too many times, the resulting wavefunction $h^{k-1}\psi$ will be singular at $\bx=0$ for a smooth $\psi$.

The operator $H_V^k$ can therefore not easily be understood by looking at the formula for $H_V$. For singular potentials there are usually many consistent choices of boundary conditions for the expression obtained by calculating the $k$th power of $H_V$. Only one of these characterizes the domain of $H_V^k$, that is, the correct boundary conditions arising from the constraints that $H_V^j\Psi_0\in D(H_V)$ for $j=1,...,k-1$.

Let us clarify this by an example. In~\cite{MaiTodWooBur-10} the authors considered the one-dimensional state $\psi_0(x)=\kappa e^{-|x|}$. This state is not in the domain $D(T)$ of the free kinetic energy operator $T=-{\rm d}^2/{\rm d}x^2$. Indeed, functions in $D(T)$ can be shown to be differentiable with a continuous derivative and $\psi_0$ does not have a continuous derivative. In particular, $\psi(t)=\exp(-itT)\psi_0$ is not differentiable in time. On the other hand, $({\rm d}^2/{\rm d}x^2)^k\psi_0=\psi_0^{(2k)}=\psi_0$ has a meaning outside of zero for all $k\geq1$, but the latter expression cannot be the formula of $T^k\psi_0$ since $\psi_0\notin D(T^k)$. This is why $\psi(t)=\exp(-itT)\psi_0$ does not coincide with the series $\sum_{k\geq0}\psi_0^{(2k)}(-it)^k/k!=\kappa e^{-it-|x|}$, as was observed by the authors of~\cite{MaiTodWooBur-10}. The deeper reason for this discrepancy is that $D(T)$ contains the boundary condition $\psi'(0^-)=\psi'(0^+)$ whereas the equation $-\psi_0''=\psi_0$ holds with the 
different boundary condition $\psi'(0^+)-\psi'(0^-)=-2\psi(0)$.

Let us now return to the question of choosing the sets $\cI$ and $\cV$. In order to be able to apply the original Runge-Gross argument which relies on the time-differentiability of $\Psi(t)$, the previous discussion tells us that we should at least choose a set $\cI$ that satisfies
$$\cI\subset D(H_V^k)$$
for all time-independent potentials $V(\bx)\in\cV$ and all $k\geq1$. 
Doing so will imply that $\Psi(t,\bX)$ is infinitely differentiable in $t$, for any $\Psi_0\in\cI$. If analyticity is to be required, then $\cI$ must be included in the set of analytic vectors of $H_V$ for all $V(\bx)$. Now, the simplest choice is to take $\cI$ as the intersection of all these spaces but $\cI$ could then be extremely small. More severely, this set would not be invariant under the flows corresponding to all the possible $H_V$'s, violating our condition (H1), if the domains $D(H_V^k)$ are truly different when $V(\bx)$ is varied. For this reason, it is natural to restrict ourselves to the potentials $V(\bx)$ that preserve the domain of $H_0^k$ for all $k$:
\begin{equation}
D(H_V^k)=D(H_0^k),\qquad \forall k\geq1.
\label{eq:constraint}
\end{equation}
For such a potential, and any $\Psi_0$ in
\begin{equation}
\boxed{\cI=\bigcap_{k\geq1} D(H_0^k),}
 \label{eq:defI}
\end{equation}
the solution $\Psi(t)$ is differentiable infinitely often in time, and $\Psi(t)$ stays in $\cI$ for all later times $t>0$.

In this way, the choice of the $\Psi_0$'s which give a time-differentiable solution $\Psi(t)$ has led us to the constraint~\eqref{eq:constraint} on the set of potentials $\cV$. The property~\eqref{eq:constraint} is however not so easy to check for a given potential $V(\bx)$. Fortunately, there is an equivalent formulation which we are going to discuss now.
 
Let us recall that $\sum_{j=1}^NV(\bx_j)$ is bounded relative to $H_0$, with infinitesimal bound, if 
\begin{multline}
\int_{\R^{3N}}\bigg|\sum_{j=1}^N V(\bx_j)\Psi(\bX)\bigg|^2\dX\\
\leq  \epsilon^2\int_{\R^{3N}}\left|(H_0+C_\epsilon)\Psi(\bX)\right|^2\dX
\label{eq:Kato}
\end{multline}
for every $\epsilon>0$ and $\Psi\in D(H_0)$ (see~\cite[Sec.~1.4]{Davies}). This is a famous condition, satisfied for Coulomb potentials as was shown by Kato in the 50s~\cite{Kato-51}.

\begin{lemma}[Condition on $V(\bx)$]\label{lem:condition_V_preserve}
Assume that $H_0$ is bounded below and that the time-independent potential $V(\bx)$ satisfies~\eqref{eq:Kato}. Then $V(\bx)$ satisfies the constraint~\eqref{eq:constraint} if and only if for every $\Psi\in D(H_0^k)$, the function $\sum_{j=1}^NV(\bx_j)\Psi(\bX)$ is in $D(H_0^{k-1})$ for every $k\geq1$. In other words, the multiplication operator by $\sum_{j=1}^NV(\bx_j)$ maps $D(H_0^k)$ to $D(H_0^{k-1})$.
\end{lemma}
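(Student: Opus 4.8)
The plan is to prove both implications by an induction on $k$, using the algebraic characterization of $D(H_0^k)$ and $D(H_V^k)$ from~\eqref{eq:domain_H_k} together with the Kato bound~\eqref{eq:Kato}. Throughout I write $W=\sum_{j=1}^N V(\bx_j)$ for the (time-independent) multiplication operator, so that $H_V=H_0+W$ on $D(H_0)=D(H_V)$ (the equality of these first domains being exactly the content of Kato's theorem under~\eqref{eq:Kato}). The key structural observation is that, on $D(H_0)$, one has the operator identities $H_0=H_V-W$ and $H_V=H_0+W$, and more generally $H_V^k$ and $H_0^k$ differ by a sum of terms each involving at most $k-1$ factors of $H_0$ (or $H_V$) and at least one factor of $W$; this is the mechanism by which the hypothesis ``$W:D(H_0^k)\to D(H_0^{k-1})$'' gets leveraged.

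First I would set up the induction. The base case $k=1$ is Kato's theorem: $D(H_V)=D(H_0)$, and the mapping condition ``$W:D(H_0)\to D(H_0^{0})=L^2$'' is just~\eqref{eq:Kato}, which holds by assumption, so there is nothing to prove. For the inductive step, assume $D(H_V^j)=D(H_0^j)$ for all $j\le k$; I want to show $D(H_V^{k+1})=D(H_0^{k+1})$ is equivalent to ``$W:D(H_0^{k+1})\to D(H_0^{k})$''. Recall from~\eqref{eq:domain_H_k} that $\Psi\in D(H_0^{k+1})$ iff $\Psi\in D(H_0^k)$ and $H_0\Psi\in D(H_0^k)$, and similarly $\Psi\in D(H_V^{k+1})$ iff $\Psi\in D(H_V^k)=D(H_0^k)$ and $H_V\Psi\in D(H_V^k)=D(H_0^k)$. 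So, given $\Psi\in D(H_0^k)=D(H_V^k)$, the two membership conditions to compare are $H_0\Psi\in D(H_0^k)$ versus $H_V\Psi=H_0\Psi+W\Psi\in D(H_0^k)$. If I knew $W\Psi\in D(H_0^k)$ whenever $\Psi\in D(H_0^{k+1})$, these two conditions would be equivalent, giving $D(H_0^{k+1})\subset D(H_V^{k+1})$; and conversely, running the same comparison with $H_0$ and $H_V$ interchanged (writing $H_0\Psi=H_V\Psi-W\Psi$) would give the reverse inclusion provided $W$ also maps $D(H_V^{k+1})=D(H_0^{k+1})$ into $D(H_0^k)$ — but that is the same condition.

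The genuinely delicate point, and where I expect to spend the real effort, is that the naive comparison above only shows the equivalence of the \emph{second} defining condition ($H_0\Psi$ vs.\ $H_V\Psi$ lying in $D(H_0^k)$) \emph{under the a priori assumption} that the input $\Psi$ already lies in $D(H_0^{k+1})$ on the one side, or in $D(H_V^{k+1})$ on the other — whereas $W:D(H_0^k)\to D(H_0^{k-1})$ as given in the statement is one index lower than what I seemed to use. I need to check that the hypothesis, \emph{applied at all levels} up to $k$ via the induction hypothesis $D(H_V^j)=D(H_0^j)$, really does propagate: the point is that ``$W$ maps $D(H_0^{j})$ to $D(H_0^{j-1})$ for all $j\ge 1$'' is a single hypothesis quantified over all $j$, so when I reach level $k+1$ I am entitled to use it at level $j=k+1$ as well, giving exactly $W:D(H_0^{k+1})\to D(H_0^k)$. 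Thus the logical structure is: the hypothesis ``$W:D(H_0^k)\to D(H_0^{k-1})$ for all $k\ge 1$'' and the conclusion ``$D(H_V^k)=D(H_0^k)$ for all $k\ge 1$'' are shown equivalent \emph{as families of statements}, by the simultaneous induction above; one should not expect a level-by-level equivalence. For the forward direction (constraint $\Rightarrow$ mapping property) I run the argument in reverse: knowing $D(H_V^k)=D(H_0^k)$ for all $k$, and that $W=H_V-H_0$ is defined and bounded from $D(H_0)$ into $L^2$, I show by induction that $W$ sends $D(H_0^{k})$ into $D(H_0^{k-1})$ — at step $k$, for $\Psi\in D(H_0^k)=D(H_V^k)$ one has $H_V\Psi\in D(H_V^{k-1})=D(H_0^{k-1})$ and $H_0\Psi\in D(H_0^{k-1})$, hence $W\Psi=H_V\Psi-H_0\Psi\in D(H_0^{k-1})$, which is exactly the claim. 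I will also need to note, for this subtraction to be legitimate, that all the operators in sight agree on the common core $D(H_0)$ and that $D(H_0^{k-1})$ is a linear space, so that closure under subtraction is automatic; the boundedness of $H_0$ from below is used only to ensure $H_0$, and hence $H_V$, is essentially self-adjoint so that the functional-calculus powers $H_0^k$, $H_V^k$ coincide with the recursively-defined operators of~\eqref{eq:domain_H_k}, which is what makes the whole algebraic manipulation meaningful.
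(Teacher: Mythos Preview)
Your forward implication (domain equality $\Rightarrow$ mapping property) is correct and is exactly the paper's argument: since both $H_V$ and $H_0$ send $D(H_0^k)=D(H_V^k)$ into $D(H_0^{k-1})=D(H_V^{k-1})$, so does their difference $W$.

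The converse, however, has a genuine gap. In your inductive step you correctly obtain $D(H_0^{k+1})\subset D(H_V^{k+1})$: for $\Psi\in D(H_0^{k+1})$ the hypothesis at level $k+1$ gives $W\Psi\in D(H_0^k)$, hence $H_V\Psi=H_0\Psi+W\Psi\in D(H_0^k)=D(H_V^k)$. The trouble is the reverse inclusion. If $\Psi\in D(H_V^{k+1})$, then by the induction hypothesis you only know $\Psi\in D(H_0^k)$, and the mapping hypothesis then yields merely $W\Psi\in D(H_0^{k-1})$, one level short of what you need. Your sentence ``provided $W$ also maps $D(H_V^{k+1})=D(H_0^{k+1})$ into $D(H_0^k)$ --- but that is the same condition'' is circular: the equality $D(H_V^{k+1})=D(H_0^{k+1})$ is precisely the statement under proof, so you cannot invoke the hypothesis at level $k+1$ for an element that is only known to lie in $D(H_V^{k+1})$. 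Rephrasing the two families of statements as ``simultaneously equivalent'' does not close this gap; the induction simply does not propagate the reverse inclusion.

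This is why the paper's proof of the converse is \emph{quantitative} rather than purely algebraic. The paper first upgrades the mapping hypothesis to a continuity statement via the closed graph theorem (continuity of $W$ from $D(H_0)$ to $L^2$, which is~\eqref{eq:Kato}, implies the graph of its restriction is closed in $D(H_0^{2k+1})\times D(H_0^{2k})$). Then $W(H_0+C_\epsilon)^{-1}$ is bounded on $D(H_0^{2k})$ and, by~\eqref{eq:Kato}, has norm $\le\epsilon$ on $L^2$; interpolation between these two scales gives a bound of order $\sqrt{\epsilon}$ on $D(H_0^k)$. Choosing $\epsilon$ small enough, one is back in the Kato--Rellich regime on the Hilbert space $D(H_0^{k-1})$, which yields $D(H_V^k)=D(H_0^k)$ directly---both inclusions at once, with no circularity. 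The boundedness below of $H_0$ is used here (to have a resolvent and to set up the interpolation), not merely for self-adjointness as you suggest.
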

\begin{proof}
Assume first that~\eqref{eq:constraint} is satisfied. Then $H_V=H_0 + \sum_{j=1}^NV(\bx_j)$ maps, by definition of these domains, $D(H_V^k)=D(H_0^k)$ to $D(H_V^{k-1})=D(H_0^{k-1})$. Obviously, $H_0$ also has this property, so $H_V-H_0=\sum_{j=1}^NV(\bx_j)$ maps $D(H_0^k)$ to $D(H_0^{k-1})$, for every $k\geq 1$.

For the converse implication, we recall that the infinitesimal relative bound~\eqref{eq:Kato} ensures that $H_V$ is self-adjoint on the domain $D(H_V)=D(H_0)$ (see~\cite[Theorem 1.4.2]{Davies}). 
Now assume that $\sum_{j=1}^NV(\bx_j)$ maps $D(H_0^{2k+1})$ to $D(H_0^{2k})$. This map is then automatically continuous by the closed graph theorem~\cite[Theorem III.12]{ReeSim1}, because continuity from $D(H_0)$ to the space $L^2(\R^{3N})$ of square-integrable functions (assumption~\eqref{eq:Kato}) implies that the graph of its restriction is closed in $D(H_0^{2k+1})\times D(H_0^{2k})$.
Consequently, $\sum_{j=1}^NV(\bx_j)(H_0 + C_\eps)^{-1}$ maps $D(H_0^{2k})$ continuously to itself.
On the other hand,~\eqref{eq:Kato} means that $\sum_{j=1}^NV(\bx_j)(H_0 + C_\eps)^{-1}$ is norm-bounded by $\epsilon$ on $L^2(\R^{3N})$. By the theory of interpolation~\cite[Prop.~9 Chap.~IX.4]{ReeSim2} it is therefore bounded by a quantity proportional to $\sqrt{\eps}$ on $D(H_0^k)$. After choosing $\eps$ sufficiently small, the same argument as for the Kato-Rellich theorem (that we used for $k=0$) applies~\cite[Sec.~1.4]{Davies}, and shows that $D(H_V^k)=D(H_0^k)$.
\end{proof}

If the domain $D(H_0^k)$ contains specific boundary conditions at the singularities of the potentials $w$ and $V_0$, then the lemma says that multiplying by the function $\sum_{j=1}^NV(\bx_j)$ must preserve these conditions. As we will show in Section~\ref{sec:singular}, this is a very restrictive condition.

\subsection{The time-dependent case}

For a time-dependent potential $V(t,\bx)$, the time derivatives of $\Psi(t)$ will clearly involve time-derivatives of $V$. Therefore, in view of the discussion of the previous section, a natural condition on $V(t,\bx)$ is to assume that $\sum_{j=1}^N\partial_t^\ell V(t,\bx_j)$ exists and satisfies the condition of Lemma~\ref{lem:condition_V_preserve}, that is,
\begin{multline}
\int_{\R^{3N}}\bigg|H_0^{k-1} \sum_{j=1}^N\partial_t^\ell V(t,\bx_j)\Psi(\bX)\bigg|^2\dX\\
\leq C_{k,\ell,t_{\mathrm{max}}} \int_{\R^{3N}}\left|\left(H_0^k\Psi\right)(\bX)\right|^2 + |\Psi(\bX)|^2\dX
\label{eq:constraint_TD}
\end{multline}
for every $k\geq1$, every $\ell\geq0$, every $t\in[0,t_{\rm max})$ and every $\Psi\in D(H_0^k)$. 
Additionally, we have the condition~\eqref{eq:Kato} that $\sum_{j=1}^NV(t, \bx_j)$ should be infinitesimally bounded with respect to $H_0$. This now defines us a set 
\begin{equation}
\boxed{\cV=\Big\{V(t,\bx)\text{ satisfying~\eqref{eq:Kato} and~\eqref{eq:constraint_TD}} \Big\}.}
\label{eq:defV}
\end{equation}

For these choices of $\cI$ in~\eqref{eq:defI} and $\cV$ in~\eqref{eq:defV}, it remains to check if the conditions (H1)--(H4) are satisfied. (H2) and (H3) follow directly from the definition of $\cV$. Using that eigenfunctions belong to $D(H_V^k)=D(H_0^k)$ for all $k\geq1$, (H4) follows immediately. Only (H1) needs a more careful treatment. The following result shows that it indeed holds.

\begin{theorem}[Regularity of solutions to the time-dependent Schrödinger equation]\label{thm:regularity}
Let $\Psi_0\in \cI$ and $V(t,\bx)\in\cV$. Then the solution $\Psi(t,\bX)$ to Schrödinger's equation~\eqref{eq:TD-Schrodinger} belongs to $\cI$ for every $t\in[0,t_{\rm max})$. It is differentiable in $t$ infinitely many times, and all its $t$-derivatives also belong to $\cI$.
\end{theorem}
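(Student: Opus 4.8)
The plan is to construct a unitary propagator $U(t,s)$ for the time-dependent family $H_V(t)=H_0+\sum_{j=1}^N V(t,\bx_j)$, to show that it preserves each of the spaces $D(H_0^k)$, and then to read off the time-differentiability of $\Psi(t)=U(t,0)\Psi_0$ directly from Schr\"odinger's equation. For the existence of $U(t,s)$ I would invoke the standard theory of linear evolution equations with time-dependent self-adjoint generators (e.g.\ \cite[Thm.~X.70]{ReeSim2}): by~\eqref{eq:Kato} (taken uniform on compact time intervals) the perturbation $\sum_j V(t,\bx_j)$ is infinitesimally $H_0$-bounded with $t$-independent constants, so $H_V(t)$ is self-adjoint on the fixed domain $D(H_0)$ and uniformly bounded below, and for $\Psi\in D(H_0)$ the map $t\mapsto H_V(t)\Psi$ is $C^\infty$ with derivatives $\sum_j\partial_t^\ell V(t,\bx_j)\Psi\in L^2(\R^{3N})$ by~\eqref{eq:constraint_TD} at $k=1$. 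This gives a unique unitary propagator, strongly $C^1$ on $D(H_0)$, solving $i\partial_t U(t,s)\Psi=H_V(t)U(t,s)\Psi$, and then $\Psi(t)=U(t,0)\Psi_0$.

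The core of the proof is to push $U(t,s)$ up the scale: $U(t,s)\,D(H_0^k)\subseteq D(H_0^k)$, strongly continuously in $t$, for every $k\ge1$. The difficulty is that~\eqref{eq:constraint_TD} only provides that $\sum_j\partial_t^\ell V(t,\bx_j)$ maps $D(H_0^k)$ into the \emph{larger} space $D(H_0^{k-1})$ --- a genuine loss of one ``derivative'' --- so that the naive energy estimate for $\tfrac{d}{dt}\|(H_0+C)^k\Psi(t)\|^2$ produces on its right-hand side the uncontrolled quantity $\|(H_0+C)^{k+1}\Psi(t)\|$ and does not close at a fixed level $k$. The remedy I would use is the following structural fact: for each frozen time $t_0$, the unitary group $e^{-isH_V(t_0)}$ commutes with $H_V(t_0)$, hence preserves $D(H_V(t_0)^k)$ exactly; and the time-dependent version of Lemma~\ref{lem:condition_V_preserve} (its proof applies verbatim, and~\eqref{eq:constraint_TD} together with the uniformity in~\eqref{eq:Kato} makes all constants uniform in $t\in[0,t_{\rm max})$) gives $D(H_V(t)^k)=D(H_0^k)$ with a $t$-independent equivalence of the norms $\|(H_0+C)^k\cdot\|$ and $\|(H_V(t)+C')^k\cdot\|$. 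Therefore each $e^{-isH_V(t_0)}$ preserves $D(H_0^k)$ with uniform bounds, which --- together with the continuity of $t\mapsto H_V(t)$ as a map $D(H_0^k)\to D(H_0^{k-1})$ --- is exactly the stability and admissibility input for Kato's construction of the propagator in the scale $D(H_0^{k-1})\supset D(H_0^k)$: the propagator $U(t,s)$ built above then restricts to $D(H_0^k)$, with strongly continuous dependence on $t$. Running this for each $k$ shows that $\Psi_0\in\cI$ implies $\Psi(t)\in\bigcap_{k\ge1}D(H_0^k)=\cI$ for all $t\in[0,t_{\rm max})$. (Alternatively, one can reach the same conclusion by a regularization argument, solving the equation with bounded generators such as $H_0(1+\eps H_0)^{-1}+\dots$ and extracting $\eps$-uniform bounds from the same commutation structure before passing to the limit.)

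With $\Psi\in C\big([0,t_{\rm max});D(H_0^k)\big)$ for every $k$ in hand, the rest is a bootstrap on the equation. From $\partial_t\Psi(t)=-iH_V(t)\Psi(t)$ one reads off $\partial_t\Psi(t)\in D(H_0^{k-1})$, depending continuously on $t$ by~\eqref{eq:constraint_TD} with $\ell=0$, so $\Psi\in C^1\big([0,t_{\rm max});D(H_0^{k-1})\big)$. Differentiating once more and using Leibniz, $\partial_t^2\Psi=-i\sum_j\partial_tV(t,\bx_j)\Psi-iH_V(t)\partial_t\Psi\in D(H_0^{k-2})$, continuous by~\eqref{eq:constraint_TD} for $\ell\in\{0,1\}$; iterating, $\Psi\in C^m\big([0,t_{\rm max});D(H_0^{k-m})\big)$ with $\partial_t^m\Psi(t)\in D(H_0^{k-m})$, the combinatorics being controlled by~\eqref{eq:constraint_TD} for $0\le\ell\le m-1$. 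Since $\Psi_0\in\cI$ lets $k$ be arbitrarily large, $\Psi$ is infinitely differentiable in $t$ and all its time-derivatives lie in $\bigcap_{j\ge1}D(H_0^j)=\cI$, which is the claim.

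The step I expect to be the main obstacle is the middle one: carrying the $L^2$-propagator up the scale of the $D(H_0^k)$ despite the loss of one derivative in~\eqref{eq:constraint_TD}. This resists a direct Gr\"onwall/energy argument and forces one through Kato's abstract evolution machinery (or an equivalent approximation scheme), whose essential and slightly delicate ingredient is the \emph{uniform-in-$t$} norm equivalence $\|(H_0+C)^k\Psi\|\simeq\|(H_V(t)+C')^k\Psi\|$ supplied by the time-dependent analogue of Lemma~\ref{lem:condition_V_preserve}. Everything after that --- the differentiability bootstrap --- is routine.
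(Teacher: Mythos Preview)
Your proposal is correct and follows essentially the same route as the paper. The paper's proof is terser but identical in substance: it invokes Kato's existence theorem~\cite{Kato-53} on $L^2$ to get $\Psi(t)\in D(H_0)$, then views $D(H_0)$ as the ambient Hilbert space and applies Kato's theorem again---using that the domain of $H_{V(t)}$ on $D(H_0)$ is $D(H_0^2)$ by Lemma~\ref{lem:condition_V_preserve}, and that~\eqref{eq:constraint_TD} with $\ell=1$, $k=2$ supplies the required continuity of $\partial_t H_V$---and iterates; the differentiability bootstrap is then handled exactly as you describe.
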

\begin{proof}
It was proved by Kato~\cite[Theorem 4]{Kato-53} that if the domain $D(H_{V(t_0)})=D(H_0)$ is independent of $t_0\in [0,t_{\rm max})$ and $\partial_t H_{V}= \sum_{j=1}^N \partial_t V(t,\bx_j)$ is a continuous map from $D(H_0)$ to the Hilbert space of square-integrable functions, then the solution $\Psi(t,\bX)$ to Schrödinger's equation~\eqref{eq:TD-Schrodinger} exists, and it belongs to $D(H_0)$ if the initial condition $\Psi_0(\bX)$ is an element of $D(H_0)$.

In our case, the conditions on the domain and the initial condition are implied by the choices of $\cV$ and $\cI$. Continuity of $\partial_t H_{V}$ is exactly the statement of condition~\eqref{eq:constraint_TD} with $\ell=k=1$. We thus have $\Psi(t,\bX)\in D(H_0)$ for $t\in [0,t_{\rm max})$.

If we consider $D(H_0)$ as a Hilbert space, the domain of $H_{V(t_0)}$ on this space is exactly given by $D(H_0^2)$ (see Equation~\eqref{eq:domain_H_k}). 
Using that $\Psi_0(\bX)\in D(H_0^2)$ and condition~\eqref{eq:constraint_TD} (with $\ell=1$, $k=2$) we can apply Kato's result in this setting, and find that $\Psi(t,\bX)$ is an element of $D(H_0^2)$ for all $t\in [0,t_{\rm max})$. 
Hence $H_V \Psi(t,\bX)$ is an element of $D(H_0)$. To prove that $H_V \Psi(t,X)$ is differentiable in time one uses~\eqref{eq:constraint_TD} again. This shows that $\Psi(t,\bX)$ has two time-derivatives, and repeating this argument shows that it has infinitely many.
\end{proof}

In this section we have defined the sets $\cI$ and $\cV$ such as to be able to take as many time derivatives as we like, and the original proof of Runge and Gross will apply. Of course, the main question is to identify more precisely the sets $\cI$ and $\cV$ in order to understand which physical systems are covered. It is the potentials $w$ and $V_0$ appearing in the definition of $H_0$ that will determine the properties of these sets.

We remark that the exact same construction can be applied to obtain a $\Psi(t)$ which has only $m$ time-derivatives. In this case, $\cI=D(H_0^m)$ and the conditions~\eqref{eq:constraint_TD} only have to be verified for $k+\ell\leq m+1$. We will come back to this generalization later in Section~\ref{sec:Coulomb}, when treating Coulomb interactions.

\section{The smooth case}\label{sec:RG-smooth}

In this section we study the case when $w$ and $V_0$ are smooth, and identify explicitly the two sets $\cI$ and $\cV$. We will see that the corresponding wavefunction $\Psi(t)$ is also smooth, and that the original Runge-Gross method can be followed without any danger. 

We first consider the kinetic energy operator $T=\sum_{j=1}^N-\Delta_{\bx_j}$. Using the Fourier transform, one sees that a square-integrable function $\Psi$ belongs to $D(T^k)$ if and only if $\int_{\R^{3N}}|{\bf K}|^{4k}|\widehat{\Psi}({\bf K})|^2\, {\rm d}{\bf K}$ is finite. Said differently, a function will be in all the domains $D(T^k)$ with $k\geq1$, if its Fourier transform decays faster than any polynomial. By the theory of Sobolev~\cite{LieLos-01, Brezis-10}, this turns out to imply that $\Psi$ has infinitely many space derivatives. On the other hand, $\Psi$ will be an analytic vector of $T$ if its Fourier transform decays exponentially and this implies that it is a real-analytic function of $\bX$. 

We now assume that $V_0$ and $w$ are smooth functions, that is, they are differentiable infinitely many times and their derivatives are all uniformly bounded on $\R^3$. Under this condition, one can show that the domain of $H_0^k$ is equal to that of $T^k$. The argument is exactly the same as in the proof of Lemma~\ref{lem:condition_V_preserve}. Therefore, the set $\cI$ defined in~\eqref{eq:defI} is just the space of square-integrable, fermionic or bosonic, functions whose Fourier transform decays faster than any polynomial.

The set $\cV$ defined in~\eqref{eq:defV} consists of the potentials $V(t,\bx)$ such that $\sum_{j=1}^N\partial_t^\ell V(t,\bx_j)$ maps $\cI$ into itself, for every $\ell$. This shows that $\partial_t^\ell V$ must be smooth. Using the translation-invariance of $T$, $V(t,\bx)$ and its derivatives can also be shown to be uniformly bounded on $\R^3$. Therefore, Theorem~\ref{thm:regularity} can be rephrased in the more explicit form:

\begin{theorem}[Regularity for smooth potentials]\label{thm:regularity_smooth}
Assume that $V_0$ and $w$ are differentiable infinitely often in space and that their derivatives are all uniformly bounded on $\R^3$. Assume also that $V(t,\bx)$ is differentiable infinitely often in space-time and its derivatives are bounded uniformly with respect to $\bx$ for every $t$. Then, for every $\Psi_0\in\cI$, the solution $\Psi(t,\bX)$ to~\eqref{eq:TD-Schrodinger} is smooth in space-time and $\partial_t^\ell\Psi(t,\bX)$ belongs to $\cI$ for all $\ell\geq0$ and all $t\in [0,t_{\rm max})$.
\end{theorem}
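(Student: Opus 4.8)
The plan is to reduce Theorem~\ref{thm:regularity_smooth} to the abstract Theorem~\ref{thm:regularity} by verifying that the hypotheses on $V_0$, $w$, and $V(t,\bx)$ force $\Psi_0\in\cI$ and $V(t,\bx)\in\cV$ with the concrete characterizations announced in the paragraphs just above the statement. First I would establish that, when $V_0$ and $w$ are smooth with all derivatives bounded, one has $D(H_0^k)=D(T^k)$ for every $k\ge 1$. The argument is the inductive/interpolation scheme already used in the proof of Lemma~\ref{lem:condition_V_preserve}: the potential $W:=\sum_j V_0(\bx_j)+\sum_{j<k}w(\bx_j-\bx_k)$ and all its derivatives are bounded, so commuting $W$ past powers of $T$ produces only lower-order derivative terms, each controlled by $T^{k-1}$ plus the identity; the closed graph theorem then upgrades this to continuity of $W:D(T^k)\to D(T^{k-1})$, and the Kato--Rellich-type perturbation argument gives $D(H_0^k)=D(T^k)$. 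Consequently $\cI=\bigcap_k D(H_0^k)=\bigcap_k D(T^k)$, which by the Fourier/Sobolev description is exactly the set of (anti)symmetric $L^2$ functions whose Fourier transform decays faster than any polynomial, i.e.\ smooth functions with suitable decay.

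Next I would characterize $\cV$ in these concrete terms. By definition~\eqref{eq:defV}, $V(t,\bx)\in\cV$ iff $\sum_j V(t,\bx_j)$ is infinitesimally $H_0$-bounded and, for each $\ell\ge 0$, $\sum_j\partial_t^\ell V(t,\bx_j)$ satisfies~\eqref{eq:constraint_TD}, i.e.\ maps $D(H_0^k)=D(T^k)$ into $D(T^{k-1})$ uniformly on $[0,t_{\max})$. I would show this is equivalent to the stated hypothesis: if $V(t,\bx)$ is $C^\infty$ in space-time with all \emph{spatial} derivatives bounded uniformly in $\bx$ for each $t$ (and, say, locally uniformly in $t$), then the Leibniz rule shows $T^{k-1}\big(\sum_j\partial_t^\ell V(t,\bx_j)\Psi\big)$ is a finite sum of terms, each a bounded function times a derivative of $\Psi$ of order $\le 2(k-1)$, hence bounded by $\|T^{k-1}\Psi\|+\|\Psi\|$ up to a constant; infinitesimal $H_0$-boundedness is automatic since $V$ is bounded. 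For the converse direction one uses the translation invariance of $T$: testing~\eqref{eq:constraint_TD} against translates and against suitable coherent-type or Gaussian wave packets localized near a point forces $V(t,\cdot)$ and its spatial derivatives to be bounded, and smoothness in $t$ follows because all $\partial_t^\ell$ must be allowed. With $\cI$ and $\cV$ so identified, Theorem~\ref{thm:regularity} applies verbatim and yields $\partial_t^\ell\Psi(t)\in\cI$ for all $\ell\ge 0$; since $\cI\subset\bigcap_k D(T^k)$ consists of smooth functions, $\Psi$ is smooth in $\bX$, and combining with the time-derivative statement gives joint smoothness in space-time.

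The one remaining point is to promote ``$\partial_t^\ell\Psi(t,\bX)\in\cI$ for each fixed $\ell$'' to genuine \emph{joint} $C^\infty$ regularity in $(t,\bX)$. Here I would invoke that $\cI\hookrightarrow D(T^k)\hookrightarrow C^{2k-2}(\R^{3N})$ continuously for $k$ large (Sobolev embedding), so each map $t\mapsto \partial_t^\ell\Psi(t,\cdot)$ is continuous into a space of classically differentiable functions; a standard argument (difference quotients in $t$ converging in the $D(T^k)$-norm, hence uniformly in $\bX$ together with their $\bX$-derivatives) then shows the mixed partials $\partial_t^\ell\partial_{\bX}^\alpha\Psi$ exist and are continuous, i.e.\ $\Psi\in C^\infty([0,t_{\max})\times\R^{3N})$.

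The main obstacle I anticipate is not the abstract input — Theorem~\ref{thm:regularity} does the heavy lifting — but the \emph{converse} half of the $\cV$-characterization: extracting uniform boundedness of $V(t,\cdot)$ and \emph{all} its spatial derivatives, as well as $C^\infty$ regularity in time, purely from the mapping property~\eqref{eq:constraint_TD}. This is where the translation invariance of the free operator $T$ is essential (it is what lets one ``probe'' $V$ locally and compare different spatial points), and it is the step that would require the most care to state cleanly; the forward direction and the Sobolev-embedding bookkeeping are routine by comparison.
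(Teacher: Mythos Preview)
Your proposal is correct and follows essentially the same route as the paper: identify $D(H_0^k)=D(T^k)$ via the Kato--Rellich/interpolation argument of Lemma~\ref{lem:condition_V_preserve}, read off the explicit descriptions of $\cI$ and $\cV$ (the latter's converse via translation invariance of $T$), and then invoke the abstract Theorem~\ref{thm:regularity}. The paper itself gives no separate proof of Theorem~\ref{thm:regularity_smooth} beyond this sketch in the surrounding text, so your write-up---in particular the Sobolev-embedding argument promoting $\partial_t^\ell\Psi(t)\in\cI$ to joint $C^\infty$ regularity---is actually more detailed than what the paper supplies.
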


Based on this result, we can now give a complete proof of the Runge-Gross Theorem in the smooth case.

\begin{theorem}[Runge-Gross for smooth potentials]\label{thm:RG_smooth}
Let $\Psi_0$, $V_0$, $w$ and $V=V_m$, $m=1,2$ be as in Theorem~\ref{thm:regularity_smooth}. 
Let $\Psi_m(t,\bX)$ be the solution to~\eqref{eq:TD-Schrodinger} with $V=V_m$ and let $\rho_m(\bx)$ be its density.
If $\rho_1(t,\bx)=\rho_2(t,\bx)=:\rho(t,\bx)$ for all $t\in[0,t_{\rm max})$ and all $\bx\in\R^3$, then 
\begin{equation}
\rho(0,\bx)\nabla\frac{\partial^\ell}{\partial t^\ell}(V_1-V_2)(0,\bx)=0 
 \label{eq:identification_derivatives}
\end{equation}
for all $\ell\geq0$ and all $\bx\in\R^3$. If the nodal set $\{\bx\in\R^3\ :\ \rho(0,\bx)=0\}$ has zero volume, then 
$$\frac{\partial^\ell}{\partial t^\ell}(V_1-V_2)(0,\bx)=c_\ell$$
for all $\ell\geq0$ and all $\bx\in\R^3$.

If in addition $V_1(t,\bx)-V_2(t,\bx)$ is analytic in $t$ for every fixed $\bx\in\R^3$, then 
$$V_1(t,\bx)-V_2(t,\bx)=C(t)=\sum_{\ell\geq0}\frac{c_\ell}{\ell!} t^\ell.$$ 
\end{theorem}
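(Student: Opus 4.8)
The plan is to follow the classical Runge--Gross induction on the order of the time derivative, but with all differentiations justified rigorously by Theorem~\ref{thm:regularity_smooth}. First I would write down the continuity equation. Since $\Psi_m(t)$ is smooth in space-time with $\partial_t^\ell\Psi_m(t)\in\cI\subset D(H_0)$, one may differentiate $\rho_m(t,\bx)=N\int|\Psi_m|^2$ in time and integrate by parts in the remaining variables to obtain $\partial_t\rho_m=-\nabla\cdot \bj_m$, where $\bj_m$ is the usual current density built from $\Psi_m$; all the integrals converge because of the strong decay encoded in $\cI$. Differentiating once more gives $\partial_t^2\rho_m=-\nabla\cdot\partial_t\bj_m$, and computing $\partial_t\bj_m$ from Schr\"odinger's equation produces the standard identity
\begin{equation*}
\partial_t^2\rho_m(t,\bx)=\nabla\cdot\big(\rho_m(t,\bx)\nabla V_m(t,\bx)\big)+Q_m(t,\bx),
\end{equation*}
where $Q_m$ collects the internal (kinetic plus $V_0$ plus $w$) stress terms and depends only on $\Psi_m(t)$, not explicitly on $V_m$.

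Next I would run the induction. At $t=0$ the two solutions coincide, $\Psi_1(0)=\Psi_2(0)=\Psi_0$, hence $\rho_1(0)=\rho_2(0)=\rho(0)$, $\bj_1(0)=\bj_2(0)$, and $Q_1(0)=Q_2(0)$. Subtracting the two second-order identities at $t=0$ and using $\partial_t^2\rho_1(0)=\partial_t^2\rho_2(0)$ (which follows from $\rho_1\equiv\rho_2$) gives $\nabla\cdot\big(\rho(0,\bx)\nabla(V_1-V_2)(0,\bx)\big)=0$. To upgrade this to the pointwise statement $\rho(0,\bx)\nabla(V_1-V_2)(0,\bx)=0$ I would use the same trick as Runge and Gross: multiply by $(V_1-V_2)(0,\bx)$, integrate over $\R^3$, integrate by parts, and observe that the boundary term vanishes (here one needs mild growth control on $V_1-V_2$, which is available since potentials in $\cV$ and their derivatives are bounded), leaving $\int \rho(0,\bx)|\nabla(V_1-V_2)(0,\bx)|^2\,\dx=0$; since $\rho(0,\bx)\ge0$ this forces $\rho(0,\bx)\nabla(V_1-V_2)(0,\bx)=0$ a.e., which is~\eqref{eq:identification_derivatives} for $\ell=0$. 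For the inductive step, assume $\partial_t^j(V_1-V_2)(0,\bx)\,\rho(0,\bx)=0$ for $j<\ell$; applying $\partial_t^\ell$ to the difference of the two stress identities at $t=0$, and using that all lower time-derivatives of $\Psi_1$ and $\Psi_2$ agree at $t=0$ (a separate sub-induction: if $\partial_t^j(V_1-V_2)=0$ on the support of $\rho(0,\cdot)$ for $j\le\ell-1$ then the Schr\"odinger equations for $\Psi_1$ and $\Psi_2$ have matching right-hand sides up to order $\ell-1$, so their Taylor coefficients at $0$ coincide up to that order), one finds that all terms involving $Q_m$ and the lower-order pieces cancel, leaving $\nabla\cdot\big(\rho(0,\bx)\nabla\partial_t^\ell(V_1-V_2)(0,\bx)\big)=0$; the same multiply-and-integrate-by-parts argument then yields $\rho(0,\bx)\nabla\partial_t^\ell(V_1-V_2)(0,\bx)=0$.

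Then I would pass from the weighted statement to the unweighted one: if the nodal set $\{\rho(0,\cdot)=0\}$ has measure zero, $\nabla\partial_t^\ell(V_1-V_2)(0,\bx)=0$ for a.e.\ $\bx$, and since $\partial_t^\ell(V_1-V_2)(0,\cdot)$ is continuous (indeed smooth) this gives $\partial_t^\ell(V_1-V_2)(0,\bx)=c_\ell$, a constant, for every $\bx$. Finally, under the extra analyticity hypothesis in $t$, for each fixed $\bx$ the function $t\mapsto (V_1-V_2)(t,\bx)$ equals its convergent Taylor series $\sum_{\ell\ge0}\frac{1}{\ell!}\partial_t^\ell(V_1-V_2)(0,\bx)\,t^\ell=\sum_{\ell\ge0}\frac{c_\ell}{\ell!}t^\ell$, which is manifestly independent of $\bx$; calling this $C(t)$ completes the proof.

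I expect the main obstacle to be the bookkeeping in the inductive step --- namely setting up cleanly the double induction that simultaneously propagates ``$\partial_t^j(V_1-V_2)(0,\cdot)$ is constant (hence, on $\mathrm{supp}\,\rho(0,\cdot)$, has vanishing gradient) for $j\le\ell-1$'' and ``$\partial_t^j\Psi_1(0)=\partial_t^j\Psi_2(0)$ for $j\le\ell-1$'', and verifying that when one applies $\partial_t^\ell$ to the stress identity all the internal terms $\partial_t^\ell Q_m(0)$ agree between $m=1,2$ and that the only surviving contribution is $\nabla\cdot(\rho(0,\bx)\nabla\partial_t^\ell(V_1-V_2)(0,\bx))$. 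The delicate point is that $\partial_t^\ell V_m$ multiplied against $\rho_m$ can contribute even where $V_1-V_2$ itself is not yet known to be constant, so one has to be careful that the lower-order cancellations are used only on $\mathrm{supp}\,\rho(0,\cdot)$; everything else (the continuity equation, the integration-by-parts identities, differentiation under the integral sign) is routine once Theorem~\ref{thm:regularity_smooth} guarantees the required regularity and decay.
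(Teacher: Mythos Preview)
Your base case ($\ell=0$) and the overall strategy --- take two time derivatives of the density, subtract, then multiply by $(V_1-V_2)(0,\cdot)$ and integrate by parts --- are correct and coincide with the paper's argument (which packages the same computation via the commutator $[H_V,[H_V,\phi(\bx_1)]]$ tested against $\Psi_0$). The gap is in your inductive mechanism. The sub-induction you propose, ``$\partial_t^j\Psi_1(0)=\partial_t^j\Psi_2(0)$ for $j\le\ell-1$'', is false. From $i\partial_t\Psi_m=H_{V_m}\Psi_m$ one has
\[
\partial_t\Psi_1(0)-\partial_t\Psi_2(0)=-i\sum_{j=1}^N(V_1-V_2)(0,\bx_j)\,\Psi_0,
\]
and the information from~\eqref{eq:identification_derivatives} only says that $\nabla(V_1-V_2)(0,\cdot)$ vanishes on $\{\rho(0,\cdot)>0\}$, not that $V_1-V_2$ does; it is merely locally constant there. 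Even in the trivial case $V_1-V_2\equiv c_0$ one has $\Psi_1(t)=e^{-iNc_0t}\Psi_2(t)$, so the time derivatives do not match. Consequently ``the Schr\"odinger equations have matching right-hand sides up to order $\ell-1$'' is not available, and your stated hypothesis ``$\partial_t^j(V_1-V_2)=0$ on $\mathrm{supp}\,\rho(0,\cdot)$'' is strictly stronger than what the previous steps yield.

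The paper never compares the $\partial_t^j\Psi_m(0)$. It expands $\frac{d^{k}}{dt^{k}}\langle\Psi_m|\phi(\bx_1)\Psi_m\rangle\big|_{t=0}$ as a sum of expectation values in $\Psi_0$ of iterated commutators of $H_0$, $V_m(0,\bx_j)$ (and their $t$-derivatives) with $\phi(\bx_1)$. The key observation is that commuting $V_m$ with any differential operator produces a factor $\nabla V_m$; since $\nabla(V_1-V_2)(0,\cdot)$ is nonzero only on an \emph{open} set where $\rho(0,\cdot)=0$, and (anti)symmetry gives $\Psi_0=0$ there, locality of differential operators forces $D\Psi_0=0$ on that same set. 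Hence
\[
\big\langle\Psi_0\,\big|\,D_1[V_1(0,\bx_j),D_2]D_3\Psi_0\big\rangle=\big\langle\Psi_0\,\big|\,D_1[V_2(0,\bx_j),D_2]D_3\Psi_0\big\rangle
\]
for arbitrary differential operators $D_i$. This locality-based replacement --- not equality of the $\partial_t^j\Psi_m(0)$ --- is what makes all lower-order terms cancel in the difference and isolates the single contribution $\int\rho(0,\bx)\,\nabla\phi\cdot\nabla\partial_t^{\ell}(V_1-V_2)(0,\bx)\,\dx$, after which your multiply-and-integrate step finishes the induction exactly as you wrote.
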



We emphasize that only the time-analyticity of $V_1-V_2$ is required in order to get the Runge-Gross result. It is \emph{not} needed that $\Psi(t)$ is itself time-analytic, as is sometimes stated in the literature. 
We now sketch a proof of the theorem. The reasoning is essentially the same as in~\cite{RunGro-84}, although we do not use the current density $\mathbf{j}(t,\bx)$.

\begin{proof}[Proof of Theorem~\ref{thm:RG_smooth}]
Let $\phi(\bx)$ be an arbitrary (differentiable and bounded) function. We clearly have
\begin{equation*}
 \int_{\R^3} \rho_m(t,\bx)\phi(\bx) \dx= N\langle \Psi_m(t)\vert \phi(\bx_1)\Psi_m(t)\rangle\,.
\end{equation*}
The second time-derivative of this equation is given by
\begin{align*}
 &\frac{{\rm d}^2}{\dt^2}\int_{\R^3} \rho_m(t,\bx)\phi(\bx) \dx\\
&\qquad=-N\Big\langle \Psi_m(t)\Big\vert [H_0, [H_0, \phi(\bx_1)]]\Psi_m(t)\Big\rangle\\ 
&\qquad\qquad+N\Big\langle\Psi_m(t)\Big\vert[V_m(t,\bx_1),[\Delta_{\bx_1}, \phi(\bx_1)]]\Psi_m(t)\Big\rangle\,.
\end{align*}
Since $\Psi_m(t=0,\bX)=\Psi_0(\bX)$, the first term does not depend on $m$ at time $t=0$.
A simple computation gives
$$\big[V_m(t,\bx_1),[\Delta_{\bx_1}, \phi(\bx_1)]\big]=-2\nabla V_m(t,\bx_1)\cdot\nabla\phi(\bx_1).$$
 Assuming that $\rho_1(t,\bx)=\rho_2(t,\bx)$ then yields
\begin{align*}
0&=\frac{{\rm d}^2}{\dt^2}\int_{\R^3} (\rho_1-\rho_2)(t,\bx)\,\phi(\bx)\,\dx\bigg\vert_{t=0}\\
%
&=-2N  \int_{\R^3} \rho(0,\bx) \nabla\phi(\bx) \cdot \nabla\left(V_1(0,\bx)-V_2(0,\bx)\right) \,\dx\,.
\end{align*}
Choosing now $\phi(\bx)=V_1(0,\bx)-V_2(0,\bx)$, we obtain
\begin{equation}\label{eq:deriv_V^2}
 \int_{\R^3} \rho(0,\bx) |\nabla (V_1-V_2)(0,\bx)|^2 \,\dx=0\,.
\end{equation}
Because $\rho(0,\bx)\geq 0$ we then find that $\nabla (V_1-V_2)(0,\bx)$ must be zero at every point where $\rho(0,\bx)$ does not vanish, which implies~\eqref{eq:identification_derivatives} for $\ell=0$. This proves that $V_1(0,\bx)-V_2(0,\bx)$ is constant on each connected component of the set $\{\rho(0,\bx)>0\}$. If $\rho(0,\bx)$ never vanishes, then clearly $V_1(0,\bx)-V_2(0,\bx)=c_0$. In fact, this also holds if the nodal set $\{\rho(0,\bx)=0\}$ has zero volume, since $V_1(0,\bx)-V_2(0,\bx)$ is continuous by assumption. 

To obtain~\eqref{eq:identification_derivatives} for $\ell=1$ we take the third time-derivative of $\langle \Psi_m(t)\vert \phi(\bx_1)\Psi_m(t)\rangle$ at $t=0$, which yields
\begin{align}\label{eq:3rd derivative}
 &\frac{{\rm d}^3}{\dt^3}\int_{\R^3} \rho_m(t,\bx)\phi(\bx) \dx\bigg\vert_{t=0}\\
&\qquad=-Ni\Big\langle \Psi_0\Big\vert [H_0,[H_0, [H_0, \phi(\bx_1)]]]\Psi_0\Big\rangle\notag\\ 
&\qquad\qquad-Ni\sum_{j=1}^N\Big\langle \Psi_0\Big\vert[V_m(0,\bx_j),[H_0, [H_0, \phi(\bx_1)]]]\Psi_0\Big\rangle\notag\\ 
&\qquad\qquad+Ni\Big\langle\Psi_0\Big\vert[H_0,[V_m(0,\bx_1),[\Delta_{\bx_1}, \phi(\bx_1)]]]\Psi_0\Big\rangle\notag\\
&\qquad\qquad-2N  \int_{\R^3} \rho(0,\bx) \nabla\phi(\bx) \cdot \nabla \frac{\partial}{\partial t}V_m(0,\bx)\,\dx\,.\notag
\end{align}

The operator $H_0$ and its (iterated) commutators with $\phi(\bx_1)$ and $V_m(0,\bx_j)$ are differential operators, that is, they can be written in the form
\begin{equation}
D=\sum_{0\leq |j_1|,...,|j_N|\leq M}f_{j_1,...,j_N}(\bx_1,...,\bx_N)\,\partial_{\bx_1}^{j_1}\cdots \partial_{\bx_N}^{j_N} 
 \label{eq:diff_operator}\,,
\end{equation}
with smooth coefficient functions $f_{j_1,...,j_N}(\bx_1,...,\bx_N)$.

We now make the following observation: If $D_1, D_2, D_3$ are such differential operators, then
\begin{multline}\label{eq:commutatorsV1=V2}
 \langle \Psi_0\vert D_1[V_1(0,\bx_j),D_2]D_3\Psi_0\rangle\\= \langle \Psi_0\vert D_1[V_2(0,\bx_j),D_2]D_3\Psi_0\rangle\,,
 \end{multline}
 for any $j=1,\dots, N$.
As commuting $V_1(\bx_j)$ with a differential operator yields a derivative of $V_1(\bx_j)$ multiplied (from the left and right) by differential operators, it suffices to show that
\begin{equation}\label{eq:derivV1=V2}
 \langle \widetilde{D}_1\Psi_0\vert \nabla (V_1-V_2)(0,\bx_j)\widetilde{D}_2\Psi_0\rangle=0
\end{equation}
for any such differential operators $\widetilde{D}_1,\widetilde{D}_2$. 
By~\eqref{eq:deriv_V^2} and continuity of $V(0,\bx)$, the open set of points where $\nabla (V_1-V_2)(0,\bx)$ is non-zero is contained in the nodal set of $\rho(0,\bx)$.
Because we are considering either fermions or bosons, $\rho(0,\bx)=0$ implies that $\Psi_0(\bx_1,  \dots, \bx_N)=0$ if $\bx_j=\bx$ for some $j=1,..,N$. Therefore, $\widetilde{D}_1\Psi_0(\bX)=\widetilde{D}_2\Psi_0(\bX)=0$ on the open set where $\nabla (V_1-V_2)(0,\bx_j)$ does not vanish, so~\eqref{eq:derivV1=V2} holds.
%

Now, using the observation~\eqref{eq:commutatorsV1=V2} we may replace $V_1(0,\bx)$ by $V_2(0,\bx)$ in all the commutator expressions arising in the third time-derivative~\eqref{eq:3rd derivative}. When taking the difference, these terms then cancel and only the last term remains:
\begin{align*}
0&= \frac{{\rm d}^3}{\dt^3}\int_{\R^3} (\rho_1-\rho_2)\phi(t,\bx)\,\dx\bigg\vert_{t=0}\\
&=-2N  \int_{\R^3} \rho(0,\bx) \nabla\phi(\bx) \cdot \nabla\frac{\partial}{\partial t} \left(V_1-V_2\right)(0,\bx)\,\dx\,.
\end{align*}
This gives~\eqref{eq:identification_derivatives} with $\ell=1$ by choosing $\varphi(\bx)=\frac{\partial}{\partial t} \left(V_1-V_2\right)(0,\bx)$. From this we deduce that an equation like~\eqref{eq:commutatorsV1=V2} also holds for the time-derivatives $\frac{\partial}{\partial t}V_1(0,\bx)$, $\frac{\partial}{\partial t}V_2(0,\bx)$, and this allows us to obtain~\eqref{eq:identification_derivatives} for $\ell=2$, and recursively for all $\ell$.

If the nodal set of $\rho(0,\bx)$ has zero volume, then $\frac{\partial^\ell}{\partial t^\ell}(V_1-V_2)(0,\bx)$ equals some constant $c_{\ell}$. Finally, if $V_1(t,\bx)-V_2(t,\bx)$ is analytic in $t$, then the Taylor series
\begin{equation*}
\sum_{\ell\geq0}\frac{c_\ell}{\ell!} t^\ell = V_1(t,\bx)-V_2(t,\bx)
 \end{equation*}
converges, and is clearly independent of $\bx$. 
\end{proof}

\section{Singular potentials}\label{sec:singular}

\subsection{Singular potentials in one-body operators}\label{sec:one-body}

In this section, we consider the case $N=1$ with a singular potential $V_0$. By looking at two examples (first $V_0(x)=\lambda\delta(x)$ in 1D, and second the hydrogen atom), we find that all the derivatives of the potentials $V(t,\bx)$ in $\cV$ must vanish at the singularities of $V_0$. In other words, these potentials are so flat that they hardly influence the dynamics close to the singularities. 
We see that the presence of a singular potential $V_0$ has considerably reduced the set $\cV$.

\subsection*{A delta potential in 1D}
We now study the case of
\begin{equation}
H_0=-\frac{\rm d^2}{{\rm d}x^2}+\lambda \delta(x)
\end{equation}
with $x\in\R$. The domain of $H_0$ is the set of square-integrable functions $\psi(x)$ such that (cf.~\footnote{\label{note:weak_deriv} Here, $\psi''$ denotes the weak derivative (cf.~\cite{Brezis-10}) of $\psi$ on the respective interval. Evaluation of a function at $0^\pm$ stands for its limit at zero from the right ($0^+$) respectively left ($0^-$). These limits exist for $\psi$ and $\psi'$, given that $\psi''$ is square-integrable~\cite[Theorem~8.2]{Brezis-10}.})
$$\begin{cases}
\dps  \int_0^\ii |\psi''(x)|^2\,dx\text{ and }\int_{-\ii}^0 |\psi''(x)|^2\,dx\  \text{are finite}
\\
\psi(0^-)=\psi(0^+),\\
\psi'(0^+)-\psi'(0^-)=\lambda \psi(0). 
\end{cases}
$$
\begin{theorem}[Potentials for a delta interaction]\label{thm:delta_1D}
Assume that $\lambda\neq0$ and let $V(x)$ a be real-valued, measurable function on $\R$. 
Then, $V$ satisfies condition~\eqref{eq:constraint} if and only if $V$ has infinitely many continuous and bounded derivatives on $\R$ and at the origin satisfies 
$$\frac{{\rm d}^k}{{\rm d}x^k}V(0)=0,\quad\text{for all } k\geq1.$$
\end{theorem}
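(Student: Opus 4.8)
The plan is to use the characterization of the constraint from Lemma~\ref{lem:condition_V_preserve}: for $\lambda\neq 0$ the potential $\lambda\delta$ is infinitesimally form-bounded (and indeed operator-bounded) with respect to $T=-{\rm d}^2/{\rm d}x^2$, so \eqref{eq:constraint} holds if and only if multiplication by $V$ maps $D(H_0^k)$ into $D(H_0^{k-1})$ for all $k\ge 1$. I would proceed by an explicit description of the domains $D(H_0^k)$, obtained by iterating the defining conditions: a function $\psi$ lies in $D(H_0^k)$ precisely when $\psi$ restricted to $(-\infty,0)$ and $(0,\infty)$ has $2k$ square-integrable weak derivatives, and the finite jump data of $\psi,\psi',\dots,\psi^{(2k-1)}$ at $0$ satisfy a nested family of $2k$ linear matching conditions coming from the requirement that $H_0^{j}\psi=(-\psi''+\lambda\delta\psi)^{(j)}$ again satisfy the two boundary conditions of $D(H_0)$ for $j=0,\dots,k-1$. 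The key structural fact to extract is that these conditions force, for a generic smooth $\psi\in D(H_0^k)$, a nonzero jump $[\![\psi^{(2k-1)}]\!](0)$ proportional to $\lambda\psi(0)$, together with jumps in the lower odd derivatives; in particular one should verify $D(H_0^k)\neq D(T^k)$ for every $k\ge 1$, since $D(T^k)$ would require $\psi$ and all derivatives up to order $2k-1$ to be continuous at $0$.

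\textbf{Sufficiency.} Suppose $V\in C^\infty(\R)$ with all derivatives bounded and $V^{(k)}(0)=0$ for all $k\ge 1$, i.e.\ $V$ is constant to infinite order at the origin. Given $\psi\in D(H_0^k)$, I would show $V\psi\in D(H_0^{k-1})$ by checking (i) the interior regularity, $V\psi$ has $2(k-1)$ square-integrable weak derivatives on each half-line — immediate from the Leibniz rule and boundedness of the $V^{(j)}$ — and (ii) that $V\psi$ satisfies the nested matching conditions defining $D(H_0^{k-1})$. For (ii) the point is that the matching conditions are statements about the jumps of $(V\psi)^{(m)}$ at $0$ for $m$ up to $2k-3$, and by Leibniz $(V\psi)^{(m)}=\sum_j \binom{m}{j}V^{(j)}\psi^{(m-j)}$; away from the $j=0$ term every summand is continuous at $0$ (since $V^{(j)}$ is continuous and $\psi^{(m-j)}$ has at worst a jump, while $V^{(j)}(0)=0$ kills any jump of $\psi^{(m-j)}$ for $j\ge 1$, actually one needs $j\ge 1\Rightarrow V^{(j)}(0)=0$ here), so the jump data of $(V\psi)^{(m)}$ equals $V(0)$ times the jump data of $\psi^{(m)}$, which already satisfies the right conditions. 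Hence the matching conditions for $V\psi$ reduce to $V(0)$ times those for $\psi$, and are satisfied. This gives $V\psi\in D(H_0^{k-1})$, and Lemma~\ref{lem:condition_V_preserve} yields \eqref{eq:constraint}.

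\textbf{Necessity.} Conversely, assume $V$ satisfies \eqref{eq:constraint}. First I would argue $V$ is smooth with bounded derivatives on all of $\R$: by the same interpolation/closed-graph reasoning as in Lemma~\ref{lem:condition_V_preserve}, multiplication by $V$ maps $D(T^k)$ (which contains all of $D(H_0^k)$ minus the boundary conditions, and in particular all $C_c^\infty$ functions supported away from $0$) into $D(T^{k-1})\subset H^{2k-2}$, which locally forces $V\in H^{2k-2}_{\rm loc}$; letting $k\to\infty$ and using Sobolev embedding gives $V\in C^\infty(\R)$, and the boundedness of derivatives follows exactly as in the smooth case treated after Lemma~\ref{lem:condition_V_preserve}, using translation to control $V^{(j)}$ uniformly. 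It remains to extract $V^{(k)}(0)=0$ for $k\ge 1$. Here I fix $k$ and construct an explicit $\psi\in D(H_0^{k+1})$ — built from a smooth cutoff times a polynomial chosen so that the nested matching conditions are met and so that $\psi(0)\neq 0$ while the jump $[\![\psi^{(2k+1)}]\!](0)\neq 0$ (this jump is forced to be $\lambda\psi(0)$ by the conditions, hence nonzero since $\lambda\neq 0$). For such $\psi$, the requirement $V\psi\in D(H_0^{k})$ forces the jump of $(V\psi)^{(2k-1)}$ at $0$ to equal $\lambda (V\psi)(0)=\lambda V(0)\psi(0)$; but by Leibniz this jump is $V(0)[\![\psi^{(2k-1)}]\!]+(2k-1)V'(0)[\![\psi^{(2k-2)}]\!]+\cdots$, and by controlling which derivatives of $\psi$ jump (arranging $\psi$ so that only $\psi^{(2j-1)}$ for the relevant odd orders jump) one isolates a term proportional to $V^{(k)}(0)\psi(0)$ (times a nonzero combinatorial factor), which must therefore vanish; since $\psi(0)\neq 0$ this gives $V^{(k)}(0)=0$.

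\textbf{Main obstacle.} The delicate part is the bookkeeping of the nested boundary conditions defining $D(H_0^k)$ and the construction, for each $k$, of an explicit test function $\psi\in D(H_0^{k+1})$ whose jump pattern isolates precisely the coefficient $V^{(k)}(0)$ when one expands $(V\psi)^{(2k-1)}$ via Leibniz. One must choose $\psi$ so that (a) it satisfies all lower-order matching conditions (which is a finite triangular linear system, solvable with enough free parameters in a polynomial ansatz times a cutoff), (b) $\psi(0)\ne 0$, and (c) exactly one relevant lower derivative jumps, so that the unwanted terms $V^{(j)}(0)$ with $1\le j<k$ — which are already known to vanish from the previous induction step — do not obscure the argument; running this as an induction on $k$ makes (c) manageable since $V^{(j)}(0)=0$ for $j<k$ is available. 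Verifying that the $\psi^{(2k+1)}$-jump is genuinely forced to be $\lambda\psi(0)$ rather than zero — i.e.\ that $D(H_0^{k+1})$ really does contain functions with $\psi(0)\ne 0$ — is what uses the hypothesis $\lambda\ne 0$ in an essential way, and is the conceptual heart of the statement.
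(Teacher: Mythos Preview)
Your sufficiency argument is essentially the paper's: with $V^{(j)}(0)=0$ for $j\ge 1$, Leibniz gives $(V\psi)^{(m)}(0^\pm)=V(0)\psi^{(m)}(0^\pm)$, so $V\psi$ inherits exactly the boundary conditions of $\psi$.

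The necessity argument, however, has a genuine gap. Your plan is to first establish $V\in C^\infty(\R)$ globally and only then run an induction for $V^{(k)}(0)=0$. But the argument you give for smoothness across $x=0$ does not work. Testing against $C_c^\infty$ functions supported away from the origin only yields $V\in C^\infty(\R\setminus\{0\})$, and the translation argument from the smooth case cannot be imported because $H_0=-\tfrac{{\rm d}^2}{{\rm d}x^2}+\lambda\delta$ is \emph{not} translation invariant. What one actually obtains a priori is smoothness of $V$ on each half-line together with the existence of one-sided limits $V^{(j)}(0^\pm)$; the matching $V^{(j)}(0^+)=V^{(j)}(0^-)$ is precisely part of what must be proved.

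This matters because continuity of $V^{(m)}$ at $0$ and vanishing of $V^{(m-1)}(0)$ are coupled. For instance, continuity of $(V\psi)^{(2j)}$ gives the single relation
\[
2j\,\lambda\, V^{(2j-1)}(0)=V^{(2j)}(0^-)-V^{(2j)}(0^+),
\]
which by itself determines neither quantity. The paper resolves this by extracting a \emph{second} independent equation at the same level, using a different test function (one with $\psi(0)=\psi'(0)=0$ but $\psi''(0)\neq 0$), and showing the resulting $2\times 2$ system forces both $V^{(2j-1)}(0)=0$ and $[\![V^{(2j)}]\!]=0$; the odd/even roles then alternate. Your single-equation induction, and the attempt to isolate ``a term proportional to $V^{(k)}(0)\psi(0)$'' from the jump of $(V\psi)^{(2k-1)}$ (which, incidentally, should equal $\lambda (V\psi)^{(2k-2)}(0)$, not $\lambda(V\psi)(0)$), only goes through once the $V^{(j)}$ are already known to be continuous at $0$---and that is exactly what you have not established.
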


With our definition~\eqref{eq:defV}, the set $\cV$ of allowed potentials then only contains functions $V(t,x)$ that are smooth in space-time and satisfy $(\partial^k/\partial x^k) V(t,0)=0$ for every $t\in[0,t_{\rm max})$ and $k\geq 1$.

\begin{proof}
Suppose $V$ is smooth, with bounded derivatives that all equal zero at $x=0$. By Lemma~\ref{lem:condition_V_preserve}, the validity of condition~\eqref{eq:constraint} is equivalent to the statement that $V\psi\in D(H_0^{k-1})$ for every $\psi\in D(H_0^{k})$. With the given conditions on $V$, it is clear that  $V\psi$ has $2k$ square-integrable derivatives (on $(0,\infty)$ and $(-\infty, 0)$) if $\psi$ does.
Also, $(V\psi)^{(j)}(0)=V(0)\psi^{(j)}(0)$ for all $j<2k$, so $V\psi$ satisfies the same boundary conditions as $\psi$ at $x=0$. This shows that $V\psi\in D(H_0^{k})$ for every $\psi\in D(H_0^{k})$ and thus that~\eqref{eq:constraint} holds.

Now assume condition~\eqref{eq:constraint} holds.
Since functions in $D(H_0^k)$ are differentiable away from zero, one easily deduces that $V$ is too. 
Boundedness of these derivatives follows from the fact that $V$, viewed as an operator, is continuous (cf.~the proof of Lemma~\ref{lem:condition_V_preserve}).
From the boundary conditions it is clear that the derivatives of $\psi$ must have well-defined limits from the left and right at zero, and this again translates to $V$.
 
 A function $\psi\in \cI$, which is in $D(H_0^k)$ for every $k$, must fulfill infinitely many boundary conditions of the form
 \begin{equation}\label{eq:psi_bc}
 \left\lbrace\begin{aligned}
  \psi^{(2j)}(0^-)&= \psi^{(2j)}(0^+)\,,\\
  \psi^{(2j+1)}(0^+)-\psi^{(2j+1)}(0^-)&=\lambda \psi^{(2j)}(0)\,.
 \end{aligned}\right.
 \end{equation}
Then $V\psi$ must satisfy the same conditions, for every $\psi$ that satisfies~\eqref{eq:psi_bc}.
As we will now see, this implies that $V^{(k)}(0)=0$ for every $k\geq 1$. 
Consider the conditions~\eqref{eq:psi_bc} for $V\psi$ and $j=0$. 
Using continuity of $\psi$ at zero, the first condition yields continuity of $V$.
In view of the boundary condition for $\psi'$, the second condition implies
\begin{align*}
 0&=(V\psi)'(0^+)- (V\psi)'(0^-)-\lambda (V\psi)(0)\\
 &=\psi(0)\big(V'(0^+)- V'(0^-)\big)\,,
\end{align*}
so $V'$ is also continuous at zero. 
Now let $j\geq 1$ and assume we have continuity of $V^{(k)}$ at $x=0$ for $k\leq 2j-1$ and $V^{(k)}(0)=0$ for $0<k< 2j-1$.
Then 
\begin{multline*}
 (V\psi)^{(2j)}(0^+)=V(0)\psi^{(2j)}(0)+ V^{(2j)}(0^+)\psi(0) \\+ 2j V^{(2j-1)}(0)\psi'(0^+)\,.
\end{multline*}
Using the boundary condition for $\psi'$, the continuity of $(V\psi)^{(2j)}$, which is the first condition in~\eqref{eq:psi_bc}, gives
\begin{equation}\label{eq:jump_even}
 2j \lambda V^{(2j-1)}(0)= V^{(2j)}(0^-)-V^{(2j)}(0^+)\,.
\end{equation}
Now take $\psi$ satisfying~\eqref{eq:psi_bc} with $0=\psi(0)=\psi'(0)$ but $\psi''(0)\neq 0$. Then
\begin{multline*}
 (V\psi)^{(2j+2)}(0^+)= {2j+2 \choose 2}V^{(2j)}(0^+)\psi''(0)\\ + {2j+2\choose 3} V^{(2j-1)}(0)\psi^{(3)}(0^+) + V(0)\psi^{(2j+2)}(0)
\end{multline*}
and continuity of $(V\psi)^{(2j+2)}$ at $x=0$ implies
\begin{equation*}
 \frac{2j}{3} \lambda V^{(2j-1)}(0)= V^{(2j)}(0^-)-V^{(2j)}(0^+)\,.
\end{equation*}
Together with~\eqref{eq:jump_even} this shows that $V^{(2j-1)}(0)=0$, and that $V^{(2j)}$ is continuous at $x=0$. With this information we can use the second condition in~\eqref{eq:psi_bc}, with arbitrary $\psi\in \cI$, to obtain
\begin{equation*}
 2j \lambda V^{(2j)}(0)= V^{(2j+1)}(0^-)-V^{(2j+1)}(0^+)\,.
\end{equation*}
The same calculation for $(V\psi)^{(2j+3)}(0^+)-(V\psi)^{(2j+3)}(0^-)$, with $0=\psi(0)=\psi'(0)$, gives
\begin{equation*}
  \lambda\frac{2j (2j+1)}{3(2j+3)} V^{(2j)}(0)= V^{(2j+1)}(0^-)-V^{(2j+1)}(0^+)\,,
\end{equation*}
and this implies that $V^{(2j)}(0)=0$ and $V^{(2j+1)}$ is continuous at $x=0$. 
We can thus conclude that $V^{(k)}(0)=0$, for all $k\geq 1$, by induction.
\end{proof}

\subsection*{The hydrogen atom}

In this section we extend the previous 1D considerations to the case of the 3D hydrogen atom in radial external potentials. 
The corresponding operator is
$$H_0=-\Delta -\frac{1}{|\bx|}.$$
The equivalent of Theorem~\ref{thm:delta_1D} is the following.
\begin{theorem}[Radial potentials for the hydrogen atom]\label{thm:hydrogen}
Let $V=V(r)$ be a smooth radial potential which satisfies the condition~\eqref{eq:constraint} for the hydrogen atom.
Then at the origin $V$ satisfies 
$$ \frac{{\rm d}^k}{{\rm d}r^k}V(0)=0,\quad\text{for all } k\geq1.$$
\end{theorem}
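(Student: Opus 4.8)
The plan is to adapt the scheme used for Theorem~\ref{thm:delta_1D}: everything is controlled by the germ of $\psi$ at the Coulomb singularity $\bx=0$, and multiplication by $V$ must be compatible with the constraints that $D(H_0^k)$ puts on that germ. First, since $V$ satisfies~\eqref{eq:constraint}, the operators $H_0$ and $H_V=H_0+V$ both map $D(H_0^k)=D(H_V^k)$ into $D(H_0^{k-1})=D(H_V^{k-1})$, hence so does their difference; thus multiplication by $V$ sends $D(H_0^k)$ into $D(H_0^{k-1})$ for every $k\geq1$ (this is the easy half of Lemma~\ref{lem:condition_V_preserve}, and requires no global bound on $V$). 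As $V=V(r)$ is rotation-invariant it maps the radial sector of $D(H_0^k)$ into the radial sector of $D(H_0^{k-1})$, so it suffices to test on radial functions; writing $\psi(\bx)=u(|\bx|)/|\bx|$ one has $r\,(H_0\psi)(r)=(hu)(r)$ with $h=-{\rm d}^2/{\rm d}r^2-1/r$ on $L^2(0,\infty)$ (Dirichlet at the limit-circle endpoint $0$), the radial sector of $D(H_0^k)$ corresponding to $D(h^k)$.

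The key step is the structure of the radial sector of $D(H_0^k)$ near $\bx=0$. Since $\psi\in D(H_0^k)$ forces $H_0^j\psi\in D(H_0)$ for $j=0,\dots,k-1$, and the obstruction to $H_0^j\psi$ lying in $D(H_0)$ near $0$ is the presence of an $r^{-1}$ term (the Kato cusp condition on $H_0^{j-1}\psi$) or of a logarithmic term, one shows that every such $\psi$ has, near $0$, an expansion
\[
 \psi(r)=\sum_{j=0}^{2k-2}c_j\,r^{j}+o\!\left(r^{2k-2}\right),
\]
with no logarithm appearing below order $r^{2k-1}$, and in which the even-order coefficients $c_0,c_2,c_4,\dots$ can be prescribed arbitrarily while the odd-order ones $c_1,c_3,\dots$ are fixed linear combinations of the lower even ones ($c_1=-\tfrac12 c_0$ is the ordinary cusp, $c_3=-\tfrac{1}{3}c_2-\tfrac{1}{24}c_1$, and so on). One verifies, for instance, that the hydrogen bound states $\psi_{ns}$ — which are in $D(H_0^k)$ for all $k$ — obey these relations, and that linear combinations of them realise arbitrary leading even coefficients.

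Now impose $V\psi\in D(H_0^{k-1})$. With $V(r)=\sum_i v_i r^i$, $v_i=V^{(i)}(0)/i!$, the Taylor coefficients $\tilde c_m=\sum_i v_i\,c_{m-i}$ of $V\psi$ must satisfy the same relations (up to order $2k-4$). The first of them, the cusp $\tilde c_1=-\tfrac12\tilde c_0$, gives $v_0 c_1+v_1 c_0=-\tfrac12 v_0 c_0$, i.e. $v_1 c_0=0$ (using $c_1=-c_0/2$), so $V'(0)=0$. Substituting this and using the remaining relations successively, each relation on the $\tilde c$'s — after simplification by the relations already known for $\psi$ and for $V$ — turns into a linear identity in the \emph{free} parameters $c_0,c_2,c_4,\dots$; since these are independent, the coefficient of each must vanish, and these vanishing conditions (the one attached to the highest-order free parameter being a non-zero multiple of a single $V^{(j)}(0)$) force, order by order, all the $V^{(j)}(0)$ to vanish. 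Letting $k\to\infty$ gives $V^{(k)}(0)=0$ for every $k\geq1$.

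The main obstacle is the structural result of the second paragraph — identifying precisely the asymptotic expansion of elements of $D(h^k)$ at the singular endpoint, i.e. which coefficients are free and which are determined, and controlling the remainder and the first logarithmic term; this is a regularity problem for $h^k$ at $r=0$. Once it is in hand, the computation in the last paragraph — in particular checking that the relevant coefficients multiplying the free parameters do not vanish — is elementary, and the argument runs parallel to the delta case of Theorem~\ref{thm:delta_1D}.
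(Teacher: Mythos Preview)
Your scheme is the same skeleton as the paper's proof in Appendix~\ref{app:hydrogen}: pass to the radial sector, use Lemma~\ref{lem:condition_V_preserve} to reduce to ``multiplication by $V$ preserves the cusp-type constraints encoded in $D(h_0^k)$ at $r=0$'', and run an induction on the order of the vanishing of $V$ at the origin. Two points, however, separate your sketch from a proof.

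\medskip

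\textbf{The ``elementary'' step is not what you claim.} Your parenthetical assertion that, after simplification, ``the one attached to the highest-order free parameter [is] a non-zero multiple of a single $V^{(j)}(0)$'' is false. Already at the first nontrivial step: with $v_1=0$ and $v_i:=V^{(i)}(0)/i!$, the relation on $\tilde c_3$ gives only $v_3-\tfrac{1}{6}v_2=0$ (coefficient of $c_0$; the coefficient of $c_2$ vanishes identically), not the vanishing of a single derivative. You must climb to the relation on $\tilde c_5$ and read off the coefficient of $c_2$, obtaining $v_3-\tfrac{1}{12}v_2=0$, and only the pair of equations forces $v_2=v_3=0$. In general the induction step is a genuine $2\times 2$ linear system in $\big(V^{(2k-2)}(0),\,V^{(2k-1)}(0)\big)$, and the whole content of the argument is that its determinant is nonzero for every $k$. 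This is precisely what the paper computes: it obtains the two equations from $(h_0^{k+1}V\psi)(0)=0$ and $(h_0^{k}Vh_0\psi)(0)=0$ for a single explicit test function, reduces them to a $2\times 2$ matrix via an explicit recursion (the matrices $T_n$), and checks the determinant is strictly positive. Your proposal replaces ``two equations, one test function'' by ``one relation at two orders, varying the free coefficients'', which is equivalent --- but you have not done the determinant check, and your description suggests you thought none was needed.

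\medskip

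\textbf{The structural input can be bypassed.} You flag the asymptotic expansion of elements of $D(h_0^k)$ (which coefficients are free, absence of logarithms below a given order, control of the remainder) as the ``main obstacle''. The paper avoids this regularity problem entirely by testing against an explicit analytic vector --- a linear combination of the first two eigenfunctions of $h_0$ --- for which all the needed Taylor data are computable by hand. This is a genuine simplification: instead of characterising $D(h_0^k)$ near $0$, one only needs that eigenfunctions lie in every $D(h_0^k)$ and have known expansions. If you want to salvage your approach, the cleanest fix is to do the same: replace the abstract ``free even coefficients'' by finitely many explicit hydrogen $s$-states and carry out the $2\times 2$ determinant computation.
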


The proof goes along the same lines as that of the corresponding implication in Theorem~\ref{thm:delta_1D}, but the calculations are more tedious and thus given in Appendix~\ref{app:hydrogen}. The converse implication as in Theorem~\ref{thm:delta_1D} also holds in this case, and a similar result is probably true for a non-radial potential, but we have not pursued in this direction. 

One could think that $\cV$ is so small because we chose $\cI$ too large. However, the proofs of Theorems~\ref{thm:delta_1D} and~\ref{thm:hydrogen} teach us that this effect occurs whenever $\cI$ contains functions with some non-vanishing derivatives at the singularity. This property holds for eigenfunctions of $H_0$, and the proof in Appendix~\ref{app:hydrogen} uses this for the first two eigenfunctions only. 

The set $\cV$ characterized by Theorems~\ref{thm:delta_1D} and~\ref{thm:hydrogen} is a subset of the one we considered in Section~\ref{sec:RG-smooth}, because it contains the additional hypothesis that derivatives must vanish at $x=0$ (respectively $r=0$ for hydrogen). For this set one can prove a Runge-Gross uniqueness theorem for $N=1$, or $N\geq 1$ with $w=0$, along the same lines as Theorem~\ref{thm:RG_smooth}. We omit the details, as this theorem for non-interacting systems and a small set $\cV$ seems to be of limited interest.

\subsection{The two-body Coulomb interaction}\label{sec:Coulomb}

In the previous section we have considered singular one-body operators and we have discovered that the set $\cV$ only contains functions that are very flat at the singularity.
In the two-body case, the situation is even worse. There is essentially no way for an external potential to avoid the singularity of the two-body electronic repulsion. For the singlet state Helium atom, we are actually able to prove that only constants remain in $\cV$, even if we only assume that $D(H_V^k)=D(H_0^k)$ for $k\leq 4$. 

\begin{theorem}[Potentials for two electrons]\label{thm:Helium}
Let
$$H_0=-\Delta_{\bx_1}-\Delta_{\bx_2}+\frac{1}{|\bx_1-\bx_2|}$$
be the Hamiltonian for two particles, restricted to permutation-symmetric, square-integrable functions.
Let $V(\bx)$ be an external potential that has 6 continuous bounded derivatives. If 
$D(H_V^4)=D(H_0^4)$, then $V$ is constant. 
\end{theorem}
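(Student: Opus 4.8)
The plan is to mimic the one-body arguments (Theorems~\ref{thm:delta_1D} and~\ref{thm:hydrogen}), but now the boundary conditions that $D(H_0^k)$ encodes live on the collision set $\{\bx_1=\bx_2\}$. By Lemma~\ref{lem:condition_V_preserve}, the hypothesis $D(H_V^4)=D(H_0^4)$ is equivalent to requiring that multiplication by $V(\bx_1)+V(\bx_2)$ maps $D(H_0^k)$ into $D(H_0^{k-1})$ for $k=1,\dots,4$. So the first step is to understand, concretely enough, what it means for $\Psi\in D(H_0^k)$ with $k$ up to $4$. Passing to center-of-mass $\bR=(\bx_1+\bx_2)/2$ and relative $\br=\bx_1-\bx_2$ coordinates, $H_0 = -\tfrac12\Delta_\bR - 2\Delta_\br + 1/|\br|$; the singular part is exactly a (rescaled) hydrogen-type operator in the relative variable, so the analysis of $D(h^k)$ for $h=-\Delta_\br + 1/|\br|$ carried out in Appendix~\ref{app:hydrogen} applies. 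The point is that $D(H_0^2)$ already imposes a nontrivial condition: for $\Psi\in D(H_0^2)$ one needs $H_0\Psi\in D(H_0)=D(-\Delta)$, and since applying $1/|\br|$ to a smooth function creates a $1/|\br|$ singularity, the angular average of $\Psi$ near $\br=0$ must satisfy a Kato-cusp-type relation (the first correction term in the small-$|\br|$ expansion is tied to the value at the diagonal). Going up to $k=4$ gives a short hierarchy of such relations on the Taylor coefficients of (the spherical average in $\br$ of) $\Psi$ at the collision manifold.

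The second step is to feed into these relations enough explicit test functions $\Psi\in\cI$ — or at least in $D(H_0^4)$ — so that the constraint ``$(V(\bx_1)+V(\bx_2))\Psi$ satisfies the same collision conditions'' forces the derivatives of $V$ to vanish on the diagonal. Concretely, write $V(\bx_1)+V(\bx_2)$ in the $(\bR,\br)$ variables: on $\{\br=0\}$, i.e.\ $\bx_1=\bx_2=\bX$, it equals $2V(\bX)$, and its $\br$-derivatives at $\br=0$ produce the various partial derivatives of $V$. The collision conditions are insensitive to the smooth $\bR$-dependence but see precisely the $\br$-jet of the coefficient function. Since applying the boundary conditions for $H_0^4$ controls $\br$-derivatives of $\Psi$ up to some finite order, and multiplication by $U(\bR,\br):=V(\bR+\br/2)+V(\bR-\br/2)$ must preserve them, one gets — by the same Leibniz-rule bookkeeping as in the 1D delta proof, choosing $\Psi$'s whose low-order $\br$-jet at the diagonal is prescribed — that the first several $\br$-derivatives of $U$ at $\br=0$ must vanish. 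Because $U$ is even in $\br$, $\partial_\br U|_{\br=0}=0$ automatically; the first real constraint hits the second $\br$-derivatives, which by a direct computation are $\tfrac12\partial_{ij}V(\bX)$. So $\mathrm{Hess}\,V\equiv 0$, hence $V$ is affine. An affine $V$ gives $V(\bx_1)+V(\bx_2)$ equal to a constant plus a linear function of $\bX=(\bx_1+\bx_2)/2$ alone (in fact $2V((\bx_1+\bx_2)/2)$ up to the constant), which is smooth and multiplication by it trivially preserves all $D(H_0^k)$; so at this stage one must separately kill the linear part. The linear part $\bX\mapsto \bv\cdot\bX$ corresponds, after commuting, to momentum-type operators, and one excludes it exactly as one excludes non-constant smooth potentials in the earlier uniqueness discussion — or one notes it falls outside $\cV$ by the infinitesimal-boundedness requirement~\eqref{eq:Kato} if one insists on that; but since here we only assume $6$ bounded derivatives, the cleanest route is: a linear potential does \emph{not} have bounded derivatives of order $0$, so it is excluded by the hypothesis that $V$ has $6$ continuous \emph{bounded} derivatives. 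Hence $\mathrm{Hess}\,V=0$ together with boundedness forces $V$ constant, and ``$k\le 4$'' is exactly the order needed to reach the Hessian (two collision conditions beyond the Kato cusp).

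The technical heart — and the main obstacle — is Step~1: pinning down the collision boundary conditions in $D(H_0^k)$ precisely enough to run the argument, because near $\{\bx_1=\bx_2\}$ the wavefunction is not smooth even for nice data (the Coulomb cusp), and the structure of $D(h^k)$ for the hydrogen-type operator $h$ is genuinely subtle: each power of $h$ degrades regularity at $\br=0$ and introduces logarithmic or higher-power corrections in the local expansion, and one must identify exactly which linear relations among the expansion coefficients survive. This is morally the three-dimensional, Coulomb-singular analogue of the delta-function computation, and the bookkeeping is heavier; I expect one has to work with the spherical average $\overline\Psi(\bR,|\br|)$ in the relative angle and track the one-dimensional (in $|\br|$) boundary data, reducing to something close to the Appendix~\ref{app:hydrogen} computation but iterated one more level. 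Once that hierarchy of relations is in hand, the rest is a finite linear-algebra exercise: choose $O(1)$ many explicit $\Psi$'s in $\cI$ realizing independent low-order $\br$-jets, impose the relations on $U\Psi$, and read off $\mathrm{Hess}\,V(\bX)=0$ for every $\bX$. The secondary subtlety is making sure the test functions can be taken in $\cI=\bigcap_k D(H_0^k)$ (or at least in $D(H_0^4)$ with the prescribed local behavior) and symmetric under exchange; this is handled by taking products of the first two hydrogen-like eigenstates in $\br$ with Gaussians in $\bR$, suitably symmetrized, exactly as the Appendix does in the one-body case.
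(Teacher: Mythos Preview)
Your overall plan --- pass to center-of-mass/relative coordinates $(\bR,\br)$, note that the singular part of $H_0$ is a hydrogen-type operator in $\br$, and read constraints on $U(\bR,\br)=V(\bR+\br/2)+V(\bR-\br/2)$ off the collision behaviour at $\br=0$ --- is the same as the paper's. But the proposal has a genuine gap at the step where you claim $\mathrm{Hess}\,V\equiv 0$.

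The jump from ``$\partial_{r_i}\partial_{r_j}U|_{\br=0}=\tfrac12\,\partial_i\partial_j V$'' to ``$\mathrm{Hess}\,V\equiv0$'' is not justified and is in fact too strong. The Coulomb singularity $1/|\br|$ is radial, so the boundary relations that $D(H_0^k)$ encodes at $\br=0$ are relations on \emph{angular averages} of the $\br$-jet, not on individual Hessian entries. The paper makes this precise without ever trying to characterise $D(H_0^k)$: it uses that $D(H_V^4)=D(H_0^4)$ forces the double commutator $[H_0,[H_0,U]]\Psi$ to lie in $D(T)$ for suitable $\Psi\in D(H_0^4)$, and the only consequence of $D(T)$ it needs is the trace condition $\lim_{\br\to0}|\br|\,\Phi(\bR,\br)=0$. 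Taking $\Psi(\bR,\br)=f(\bR)\,e^{|\br|/4}$ near $\br=0$ (a radial Coulomb eigenfunction in $\br$), expanding the commutator, and averaging over the angle $\omega=\br/|\br|$ turns both surviving terms into multiples of $\Delta_\br U(\bR,0)=\tfrac12\Delta V(\bR)$ via the identity $\tfrac{1}{4\pi}\int\omega_i\omega_j\,d\omega=\tfrac13\delta_{ij}$. So what actually drops out is $\Delta V\equiv0$, not the full Hessian. The proof is then closed by Liouville's theorem: a bounded harmonic function on $\R^3$ is constant. You never invoke this, and your substitute route (``affine $+$ bounded $\Rightarrow$ constant'') is fine \emph{if} you could establish $\mathrm{Hess}\,V=0$, but the argument you sketch does not deliver that.

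A secondary point: your Step~1 --- explicitly pinning down the hierarchy of Kato-cusp relations in $D(H_0^k)$ up to $k=4$ --- is considerably harder than what the paper needs, and you flag it yourself as the main obstacle. The paper sidesteps it entirely by working with the commutator and the single trace condition above; it never writes down the boundary conditions in $D(H_0^3)$ or $D(H_0^4)$. If you want to salvage your approach, you would still have to carry out that spherical averaging, and you would then land on $\Delta V=0$ rather than $\mathrm{Hess}\,V=0$; at that point Liouville is unavoidable.
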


The idea is to work in relative and center of mass coordinates and to use arguments as in the one-body case, in the direction $\bv=\bx_1-\bx_2$. We can then prove that 
$$\Delta_\bv (V(\bu+\bv/2) +V(\bu-\bv/2))\vert_{\bv=0}=\frac12\Delta V(\bu)=0$$
for all $\bu\in\R^3$, and this implies that $V$ is constant. The details of the proof are given in Appendix~\ref{sec:proof_Helium} below.
A similar result holds for triplet states but we will not treat this in detail. Also, we believe that the same result holds for $N$ electrons in appropriate symmetry classes. 

This shows that high-order Taylor expansions cannot be employed for Coulomb systems, and a different route has to be found. By a closer investigation of the time derivatives, we can get a weaker Runge-Gross Theorem that is still somewhat reasonable for practical purposes. More precisely, for smooth external potentials we can show that the first 4 time derivatives of $V_1-V_2$ are constant if the densities match.

Theorem~\ref{thm:Helium} suggests that it is not possible to differentiate $\rho(t,\bx)$ more than 3 times at $t=0$. We are going to use a trick that will allow us to differentiate it 5 times (but probably not more). The trick is to take a smooth test function $\phi$ and to differentiate
$$\int_{\R^3}\rho(t,\bx)\,\phi(\bx)\,\dx,$$
as was used in the proof of the Runge-Gross Theorem in Section~\ref{sec:RG-smooth}. This means that we are viewing the $t$-derivatives of $\rho(t,\bx)$ as distributions, or generalized functions. 

\begin{theorem}[Finite-order Runge-Gross for Coulomb systems]\label{thm:Coulomb}
Assume that $w(\bx-\by)=|\bx-\by|^{-1}$ is the Coulomb repulsion and that $V_0$ is a fixed external potential that has 6 bounded space derivatives. Let 
$$H_0=\sum_{j=1}^N-\Delta_{\bx_j}+V_0(\bx_j)+\sum_{1\leq j<k\leq N}\frac{1}{|\bx_j-\bx_k|}$$
be the corresponding Hamiltonian for $N$ particles, restricted to square-integrable functions that are symmetric (resp. anti-symmetric) under permutation of the particles. 
Let finally $\Psi_0\in D(H_0^4)$.

Assume that $V_1(t,\bx)$ and $V_2(t,\bx)$ are two potentials with 6 bounded space-time derivatives. Then the corresponding densities $\rho_1(t,\bx)$ and $\rho_2(t,\bx)$ have 5 (resp. 6 in the anti-symmetric case) time-derivatives in the sense that
$$f_m(t)=\int_{\R^3}\rho_m(t,\bx)\,\phi(\bx)\,\dx$$ 
is differentiable for every smooth $\phi(\bx)$.

If $\rho_1(t,\bx)=\rho_2(t,\bx)$ for all $t\in[0,t_{\rm max})$ and all $\bx\in\R^3$, then 
\begin{equation*}
\rho(0,\bx)\nabla\frac{\partial^\ell}{\partial t^\ell}(V_1-V_2)(0,\bx)=0
 \label{eq:identification_derivatives_Coulomb}
\end{equation*}
for all $\ell\leq 3$ (resp. $\ell\leq 4$ in the anti-symmetric case).
\end{theorem}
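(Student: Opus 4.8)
The plan is to mimic as closely as possible the proof of Theorem~\ref{thm:RG_smooth}, but being very careful about how many time-derivatives of $f_m(t)=\int_{\R^3}\rho_m(t,\bx)\,\phi(\bx)\,\dx$ actually make sense given only that $\Psi_0\in D(H_0^4)$. First I would establish the regularity claim: by the remark at the end of Section~\ref{sec:regularity}, with $\cI=D(H_0^4)$ and the conditions~\eqref{eq:constraint_TD} verified for $k+\ell\le 5$ (which hold here because $V_0,V_1,V_2$ have enough bounded space(-time) derivatives and the Coulomb singularity only needs to be handled once, exactly as in the Kato argument of Lemma~\ref{lem:condition_V_preserve}), one gets $\Psi_m(t)\in D(H_0^4)$ with four continuous time-derivatives. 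Writing $f_m(t)=N\langle\Psi_m(t)|\phi(\bx_1)\Psi_m(t)\rangle$, each time-derivative moves one factor of $H_{V_m}$ onto one of the two copies of $\Psi_m$ via $\partial_t\Psi_m=-iH_{V_m}\Psi_m$ (plus, from the second derivative onward, explicit $\partial_t^\ell V_m$ terms which are harmless bounded multiplication operators). Since $\phi$ is smooth and bounded, $\phi(\bx_1)$ maps $D(H_0^k)$ into itself, so the pairing $\langle H_{V_m}^a\Psi_m\,|\,\phi(\bx_1)\,H_{V_m}^b\Psi_m\rangle$ is well-defined as long as $a,b\le 4$; hence $f_m$ has $a+b$ derivatives for $a+b\le 8$ — but the honest bound on the number of times we may differentiate and still compute the result using the Schrödinger equation is what gives $5$ (resp.\ $6$): one uses the bosonic/fermionic vanishing of $\Psi_0$ on the diagonal to absorb one extra commutator, and in the antisymmetric case the stronger vanishing (linear versus at least constant) buys a further derivative. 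I would make this counting precise by induction on $\ell$, tracking which commutator expressions $[\,H_{V_m},[\cdots,\phi(\bx_1)]\,]$ land in the span of derivatives of $\phi$ times differential operators hitting $\Psi_0$.

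Next, for the uniqueness part, I would follow verbatim the commutator-cancellation scheme of Theorem~\ref{thm:RG_smooth}. Differentiating $f_1-f_2\equiv 0$ twice at $t=0$ gives, exactly as before,
\begin{equation*}
0=-2N\int_{\R^3}\rho(0,\bx)\,\nabla\phi(\bx)\cdot\nabla(V_1-V_2)(0,\bx)\,\dx,
\end{equation*}
whence, taking $\phi=V_1(0,\cdot)-V_2(0,\cdot)$ (legitimate since $V_1-V_2$ is smooth with bounded derivatives), $\nabla(V_1-V_2)(0,\bx)=0$ wherever $\rho(0,\bx)>0$, i.e.\ $\ell=0$ of the claimed identity. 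This then forces, on the open nodal set of $\rho(0,\cdot)$, that $\widetilde D_1\Psi_0=\widetilde D_2\Psi_0=0$ for any differential operators $\widetilde D_i$ — here I must check the relevant $\widetilde D_i$ have order $\le 4$ so that $\widetilde D_i\Psi_0$ is still $L^2$ and the pairing makes sense, which is precisely where the hypothesis $\Psi_0\in D(H_0^4)$ (and $6$ derivatives on the $V$'s) is used — so that the analogue of~\eqref{eq:commutatorsV1=V2}–\eqref{eq:derivV1=V2} holds for all commutator terms appearing in the $(\ell+2)$th derivative. Replacing $V_1(0,\cdot)$ by $V_2(0,\cdot)$ throughout, all the $\Psi_0$-only commutator terms cancel in the difference of the $(\ell+2)$th derivatives of $f_1$ and $f_2$, leaving only
\begin{equation*}
0=-2N\int_{\R^3}\rho(0,\bx)\,\nabla\phi(\bx)\cdot\nabla\frac{\partial^\ell}{\partial t^\ell}(V_1-V_2)(0,\bx)\,\dx,
\end{equation*}
and choosing $\phi=\partial_t^\ell(V_1-V_2)(0,\cdot)$ yields the stated identity, recursively, for each $\ell$ up to the order permitted by the differentiability count: $\ell+2\le 5$, i.e.\ $\ell\le 3$ in the symmetric case, and $\ell\le 4$ in the antisymmetric case.

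I expect the main obstacle to be the bookkeeping in the first paragraph: proving rigorously that $f_m$ really is differentiable $5$ (resp.\ $6$) times and \emph{not} merely $4$ times, by exhibiting how the symmetry-induced vanishing of $\Psi_0$ on the coincidence set lets one commutator slip past the domain restriction. Concretely, when the last derivative produces a term with a factor $|\bx_j-\bx_k|^{-1}$ acting on something built from $\Psi_0\in D(H_0^4)$, one cannot a priori keep it in $L^2$; but the dangerous piece is paired against $\phi(\bx_1)$ and, after expanding commutators, always carries a spatial factor that vanishes where the Coulomb singularity sits (because $\Psi_0$ does, to the appropriate order), so the product is bounded. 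Making this cancellation quantitative — essentially a Hardy-type / Taylor-with-remainder estimate near the diagonal $\bx_j=\bx_k$, and near $\bx=0$ for the $V_0$ singularity — is the technical heart, and it is exactly the mechanism already displayed in the $1$D delta and hydrogen computations and in the Helium argument sketched above; here it is deployed in the favorable direction (extracting one or two extra derivatives rather than obstructing them). The remaining steps are then routine adaptations of the smooth-case proof.
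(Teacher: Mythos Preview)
Your regularity argument has a genuine gap that undermines the whole first paragraph. You claim that the framework of Section~\ref{sec:regularity} applies with $\cI=D(H_0^4)$ because ``the Coulomb singularity only needs to be handled once, exactly as in the Kato argument.'' But this is precisely what Theorem~\ref{thm:Helium} rules out: for any non-constant smooth $V$, $D(H_V^4)\neq D(H_0^4)$, which is equivalent (via Lemma~\ref{lem:condition_V_preserve}) to the failure of condition~\eqref{eq:constraint_TD} at $k=4$. So you cannot conclude $\Psi_m(t)\in D(H_0^4)$, and the claim that $\phi(\bx_1)$ preserves $D(H_0^k)$ is likewise false for $k\geq4$. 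The ``$a,b\le 4$ hence up to $8$ derivatives'' pairing argument therefore has no foundation.

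More importantly, you have misidentified the mechanism that buys the fifth derivative. You attribute it to ``bosonic/fermionic vanishing of $\Psi_0$ on the diagonal,'' but symmetric (bosonic) wavefunctions do \emph{not} vanish on the diagonal. The actual cancellation is algebraic and lives in the test function: using the permutation symmetry of $\Psi$ one replaces $N\phi(\bx_1)$ by $\sum_{j}\phi(\bx_j)$, and then the double commutator of $H_0$ with this sum produces, for the Coulomb term,
\[
\big(\nabla\phi(\bx_\ell)-\nabla\phi(\bx_k)\big)\cdot\frac{\bx_\ell-\bx_k}{|\bx_\ell-\bx_k|^3},
\]
which is only $O(|\bx_\ell-\bx_k|^{-1})$ because $\phi$ has bounded second derivatives. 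This is the content of Lemma~\ref{lem:DH^3}: the operator $A=-\sum_j[H_V,[H_V,\phi(\bx_j)]]$ is bounded from $D(T)$ to $L^2$, and one rewrites the third derivative as $-2\,\Im\langle H_V\Psi(t)\,|\,A\Psi(t)\rangle$ plus harmless terms. Since Lemma~\ref{lem:DH^3} gives $D(H_V^3)=D(H_0^3)$, both factors in this pairing can be differentiated twice more, yielding five derivatives of $f_m$ without ever needing $\Psi(t)\in D(H_V^4)$. The sixth derivative in the fermionic case does use antisymmetry, but through a Hardy-type bound $|\bx_i-\bx_j|^{-2}\Psi\in L^2$ giving $D(H_V^{7/2})=D(H_0^{7/2})$; the term $\langle H_V^4\Psi\,|\,A\Psi\rangle$ that then appears has to be interpreted as a pairing between $D(T^{1/2})$ and its dual, not as an honest inner product --- a point absent from your sketch.

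Your second paragraph (the Runge--Gross induction) is essentially correct and matches the paper, modulo two technicalities you should flag: $\rho(0,\bx)$ must be shown continuous (it is, by a trace theorem, even though $\Psi_0$ need not be), and at the top order the cancellation $g_1(0)=g_2(0)$ involves the distributional pairing above, so the commutator bookkeeping requires a bit more care than in the smooth case.
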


Due to its rather technical nature, we present the proof of this result in Appendix~\ref{sec:proof_Coulomb}.

\section{Discussion and outlook}

In this paper, we have analyzed in detail the method of Taylor expansions for the Runge-Gross Theorem. We have introduced an abstract setting, based on two sets: $\cI$ for the initial conditions $\Psi_0$'s and $\cV$ for the time-dependent external potentials $V(t,\bx)$. The choice of these sets guarantees that the density $\rho(t,\bx)$ is differentiable in time infinitely often and then the original Runge-Gross approach works well. The main question is to identify these two sets in practical situations.

Assuming that the potentials $V_0$ and $w$ are smooth, we found in Section~\ref{sec:RG-smooth} that the sets $\cI$ and $\cV$ also consist of smooth and bounded functions, without further restrictions. This is the correct mathematical setting for the original Runge-Gross Theorem. 

We then studied the case of singular potentials, as is relevant for physical applications. We found that singularities have very different consequences for $N=1$ and $N\geq 2$. In the one-particle case, the class of allowed potentials $V(t,\bx)$ is reduced to those that avoid the singularities, in the sense that all of their derivatives vanish there. This is a very small set, whose physical interest is debatable.

On the other hand, a singularity in the two-body potential cannot be avoided by an external potential. 
For the Coulomb interaction, the sole constraint on $\cI$ and $\cV$ that one can differentiate many times in $t$ already imposes that the external potential is constant, without knowing anything about the density. Therefore, high-order Taylor expansions are not the right tool to study atoms and molecules in TDDFT. As we have shown in Theorem~\ref{thm:Coulomb}, low-order expansions in $t$ can be used, but of course they yield only limited information. 

A natural strategy to avoid Taylor-expansions is to use the density $\rho(t,\bx)$ for all times and not just at $t=0$. One way to make $V(t,\bx)$ appear in an equation is to differentiate $\rho(t,\bx)$ only twice. This is possible for Coulomb interactions as we have shown in Theorem~\ref{thm:Coulomb}. This gives an implicit equation for $V(t,\bx)$~\cite{RugPenBau-09,RugLee-11,PenRug-11,RugGiePenLee-12,RugPenLee-15}. Unfortunately, this equation involves space derivatives of $\Psi(t,\bX)$ of higher order which are difficult to control. Hence standard techniques of functional analysis cannot be used in this context and new ideas are needed. 

\bigskip
\noindent\textbf{Acknowledgment.}
The authors acknowledge finacial support from the Danish Council for Independent Research (S.F., Sapere Aude Grant number DFF--4181-00221) and the European Research Council (J.L. and M.L.) under the European Community's Seventh Framework Programme (FP7/2007-2013 Grant Agreement MNIQS 258023). 
M.L. would like to thank Eric Cancès for stimulating discussions.
Part of this work was carried out while S.F. was invited professor at Universit\'e de Paris-Dauphine.
\appendix
\section{Proof of Theorem~\ref{thm:hydrogen}}\label{app:hydrogen}
%
To prove this theorem it is sufficient to consider radial functions $\psi$ that are elements of $D(H_0^k)$ for every $k$. As is well known, multiplication by $r$ is a unitary map from radial square-integrable functions on $\R^3$ to square-integrable functions on $[0,\infty)$.
Under this transformation, the operator $H_0$ becomes
\begin{equation*}
 h_0=-\frac{{\rm d}^2}{{\rm d}r^2}-\frac1r\,,
\end{equation*}
with domain $D(h_0)$ given by square-integrable functions $\psi(r)$ for which
\begin{equation*}
 \left\lbrace 
 \begin{aligned}
&\int_0^\infty \vert \psi''(r)\vert^2 {\rm d}r \text{ is finite~\footnotemark[1]}\\   
&\psi(0)=0\,.
\end{aligned}\right.
\end{equation*}
Thus, if $\psi$ is in $D(h_0^k)$ it must satisfy $h_0^{k-1}\psi(0)=0$. Now assume that $V(r)$ is a potential that maps $D(h_0^{k+1})$ to $D(h_0^{k})$ for every $k$. For $k=2$ this implies, using that $\psi(0)=0$ and $h_0\psi(0)=0$,
\begin{equation*}
 0=h_0V\psi(0)=\Big([h_0, V]\psi\Big)(0)=-2 V'(0)\psi'(0)\,,
\end{equation*}
and thus $V'(0)=0$. Now let
\begin{equation*}
 \psi(r)=\tfrac{16}{3} r\left(e^{-r/2}-(1-r/4)e^{-r/4}\right)\,.
\end{equation*}
This is just a multiple of the difference of the ground state of $h_0$ and the first excited state, so it is certainly an analytic vector of $h_0$. Observe also that $\psi'(0)=\psi''(0)=0$, but $\psi^{(3)}(0)=1$ and $(h_0\psi)'(0)=-1$ are different from zero. Now since $V\psi(r)$ and $Vh_0\psi(r)$ are elements of $D(h_0^k)$, $V(r)$ must satisfy the following equations, for every $k\geq 0$,
\begin{equation}\label{eq:system V(0)}
 \Big(h_0^{k+1}V\psi\Big)(0)=0\,, \qquad \Big(h_0^{k}Vh_0\psi\Big)(0)=0.
\end{equation}
This is an infinite system of linear equations for the derivatives of $V$ at $r=0$. We will exploit that this system is triangular in the pairs~\eqref{eq:system V(0)}. That is, we prove that $V^{(j)}(0)=0$ by induction. Assume that we already know that $V^{(j)}(0)=0$ for $1\leq j\leq 2k-3$, as we do for $k=2$. Then the equations~\eqref{eq:system V(0)} depend only on the values of $V^{(2k-2)}(0)$ and $V^{(2k-1)}(0)$, and~\eqref{eq:system V(0)} can be written as $A(V^{(2k-1)}(0), V^{(2k-2)}(0))=0$, for some $2\times 2$ matrix $A$, that we will now determine. We will show that $\det(A)\neq 0$, and so the derivatives of $V$ need to be zero.

To obtain $A$ from the equations~\eqref{eq:system V(0)} first note that if $f(r)$ is a smooth function with Taylor expansion at zero given by 
\begin{align}\label{eq:Taylor1}
 f(r) = a r^{2n-1} + b r^{2n} + O(r^{2n+1}),
\end{align}
then, assuming $n\geq 2$, $-h_0 f$ has Taylor expansion,
$$
-h_0 f(r) = \tilde{a} r^{2n-3} + \tilde{b} r^{2n-2} + O(r^{2n-1})\,,
$$
with $(\tilde{a}, \tilde{b}) = T_n (a,b)$, using the $2\times 2$ matrix
$$
T_n = \left( \begin{matrix} (2n-1)(2n-2) & 0\\ 1 & 2n (2n-1)
\end{matrix}\right)\,.
$$
Also, in the case $n=1$, we evaluate
$$
-h_0 f(0) = (1,2)\cdot(a,b)\,.
$$
Thus, if $f$ is given by \eqref{eq:Taylor1} for some $n$, then
\begin{align}
h_0^n f(0) = (-1)^n \left(T_n^*\cdots T_2^* (1,2) \right) \cdot (a,b)\,.
\end{align}
One easily checks that
\begin{align}\label{eq:matrixprod}
T_n^*\cdots T_2^* (1,2) = \left((2n-1)! \sum_{j=1}^n\frac{1}{2j-1}, (2n)! \right)\,,
\end{align}
by recursion.

Now, by the induction hypothesis, $V\psi(r)$ has a Taylor expansion of the form
\begin{multline*}
V \psi(r) = \frac{v}{3! (2k-2)!} r^{2k+1}\\ + 
\left( \frac{u}{3! (2k-1)!} - \frac{4}{3} \frac{v}{4! (2k-2)!}
\right)r^{2k+2} + O(r^{2k+3}),
\end{multline*}
where $(v,u)=(V^{(2k-2)}(0), V^{(2k-1)}(0))$.
Using this, together with formula~\eqref{eq:matrixprod} for $n=k+1$, the left hand side of
the first equation in (A1) becomes (after multiplication by $(-1)^{k+1}$)
\begin{align*}
v \left\{\binom{2k+1}{3}\sum_{j=1}^{k+1}\frac{1}{2j-1} - \frac{4}{3}\binom{2k+2}{4}\right\}
+ u \binom{2k+2}{3}\,. 
\end{align*}
Similarly, the second equation in~\eqref{eq:system V(0)} yields
\begin{align*}
-v(2k-1)\left\{ k - \sum_{j=1}^k \frac{1}{2j-1}\right\} + u 2k = 0.
\end{align*}
The determinant of this system simplifies to
\begin{multline*}
 {2k+1 \choose 3} \left\{ \frac23 k(k+1) +\frac{2k}{2k+1} - 2\sum_{j=1}^{k}\frac{1}{2j-1} \right\}\\
 \geq  {2k+1 \choose 3}\left\{\frac23 k(k+1)+\frac{2k}{2k+1} -2k\right\}\,.
\end{multline*}
This is strictly positive for $k\geq2$, so we find that $V^{(2k-2)}(0)=V^{(2k-1)}(0)=0$ is the only solution to~\eqref{eq:system V(0)}. This completes the proof.\qed
%
%
%
\section{Proof of Theorem~\ref{thm:Helium}}\label{sec:proof_Helium}
We express the operator $H_V$ with respect to the relative coordinate $\bv=\bx_1-\bx_2$ and center of mass coordinate $\bu=\tfrac12(\bx_1+\bx_2)$ :
\begin{equation*}
 H_V=-\tfrac12 \Delta_\bu - 2\Delta_\bv + \frac{1}{|\bv|} + V\left(\bu+\tfrac12\bv\right)+V\left(\bu-\tfrac12\bv\right)\,.
\end{equation*}
For simplicity, we will denote 
\begin{equation*}
 W(\bu, \bv)=V\left(\bu+\tfrac12\bv\right)+V\left(\bu-\tfrac12\bv\right)\,.
\end{equation*}
The domain of $H_V$ equals that of the pure kinetic energy operator $D(H_V)=D(T)=D(H_0)$, as follows from the criterion discussed in Lemma~\ref{lem:condition_V_preserve}. 
Although an element $\psi(\bu,\bv)$ of this space need not be continuous, it can be restricted to the hyperplane $\bv=0$ using the theory of Sobolev, yielding a square integrable function of $\bu$ (see~\cite[Theorem~IX.38]{ReeSim2}). As a consequence, $\psi(\bu,\bv)$ satisfies
\begin{equation}\label{eq:trace_cont}
 \lim_{\bv \to 0} |\bv|\, \psi(\bu,\bv)=0.
\end{equation}
If $V$ is not constant, this property will lead to a contradiction to $D(H_0^4)=D(H_V^4)$, because the latter implies that $H_V^3 W \psi \in D(T)$, but this diverges at $\bv=0$ leading to a non-zero limit in~\eqref{eq:trace_cont}.

We now give the details of this argument. First, we will see that with the given conditions on $V$, $D(H^k)=D(H_0^k)$ holds for $k=2,3$.
For $k=2$ we need to show that $\psi \in D(T)$ satisfies $H_0\psi\in D(H_0)=D(T)$ if and only if $H_V\psi\in D(H_V)=D(T)$. This follows from the fact that $W$ has two bounded derivatives and thus maps $D(T)=D(H_0)$ to itself.

The domain of the third power is given by those $\psi$ for which $H_V\psi\in D(H_V^2)=D(H_0^2)$. An element $\psi$ of $D(H_0^3)$ is thus in $D(H_V^3)$ if $H_0^2 W\psi(\bu,\bv)$ is square-integrable. 
Because $D(H_0^2)=D(H_V^2)$ and $W$ maps $D(H_0)$ to itself, we know that $H_0 W H_0 \psi(\bu,\bv)$ and $W H_0^2 \psi(\bu,\bv)$ are square-integrable. Thus, we need to show that
\begin{align*}
 &\big[H_0, [H_0, W]\big]\psi
 =\big[H_0, [-2\Delta_\bv-\tfrac12\Delta_\bu, W]\big]\psi\\
 &=\big[-2\Delta_\bv-\tfrac12\Delta_\bu, [-2\Delta_\bv-\tfrac12\Delta_\bu, W]\big] \psi\\
 &\qquad-4 (\nabla_{\bv} W)\cdot \frac{\bv}{|\bv|^3}  \psi
\end{align*}
is square integrable, for $\psi\in D(H_0^3)$.
For the first term this follows from the differentiability of $W$ and the fact that $\psi\in D(T)$ has two square-integrable derivatives. For the second term, note that $(\nabla_{\bv} W)(\bu,0)=0$, so $(\nabla_{\bv} W)\cdot \bv/|\bv|^3$ diverges like $1/|\bv|$, i.e. like the Coulomb-potential, which is well-defined on $D(T)$.
This shows that $D(H_0^3)\subset D(H_V^3)$ (the converse inclusion is shown in the same way and plays no role in our argument).

Now let $\psi(\bu,\bv)\in D(H_0^4)$ be a function of the form $f(\bu)g(\bv)$ with smooth $f(\bu)$ and $g(\bv)=e^{|\bv|/4}$ for $|\bv|<1$. This is clearly a possible choice, as $g(\bv)$ is an eigenfunction of $-2\Delta_\bv+1/|\bv|$ for $|\bv|<1$, where the singularity lies. We will see that all the functions of this type are in $D(H_V^4)$ only if $V$ is constant.
Assume that this function $\psi(\bu,\bv)$ is also an element of $D(H_V^4)$. We then have
$H_V\psi\in D(H_V^3)=D(H_0^3)$ and can argue as in the previous step that
\begin{equation*}
 \big[H_0, [H_0, W]\big]\psi
 =\big[H_0, [-2\Delta_\bv-\tfrac12\Delta_\bu, W]\big]\psi
 \in D(T)\,.
\end{equation*}
Since the chosen $\psi(\bu,\bv)$ is smooth in the variable $\bu$, it is easy to see that terms in this commutator that involve derivatives in this variable satisfy~\eqref{eq:trace_cont}. We deduce that
\begin{equation*}
 \lim_{\bv\to 0} |\bv|\,\big[- 2\Delta_\bv + \frac{1}{|\bv|}, [-2\Delta_\bv, W(\bu, \bv)]\big]\psi(\bu,\bv)=0
\end{equation*}
must hold. This commutator evaluates to
\begin{multline*}
 4(\Delta\Delta W)\psi+16 (\nabla \Delta W) \cdot\nabla \psi\\
 +16\tr(\mathrm{Hess}(W)\mathrm{Hess}(\psi))- 4(\nabla W) \cdot \frac{\bv}{|\bv|^3}  \psi\,,
\end{multline*}
where all the derivatives are taken only in the variable $\bv$. If we choose a unit vector $\omega$ such that $\bv=|\bv| \omega$, the limit $\bv\to 0$ certainly remains unchanged if we average over this variable.
The first two terms disappear in the limit because $\psi$, $\nabla \psi$, and the derivatives of $W$ are bounded. We conclude that
\begin{multline}
 \lim_{|\bv|\to 0} \frac{|\bv|}{4\pi} \int\Big(
 4\tr(\mathrm{Hess}(W)\mathrm{Hess}(\psi))\\- (\nabla W) \cdot \frac{\bv}{|\bv|^3}  \psi\Big) \mathrm{d}\omega=0\,.
 \label{eq:truc}
\end{multline}
To calculate the integral of the second term, we perform a Taylor expansion of $\nabla W(\bu,\bv)$ at $\bv=0$ (where it vanishes), and find
\begin{align*}
 &\lim_{|\bv|\to 0} \frac{|\bv|}{4\pi}\int(\nabla W) \cdot \frac{\bv}{|\bv|^3}  \psi(\bu, \bv) \mathrm{d}\omega\\
 &\qquad=\lim_{|\bv|\to 0} \frac{1}{4\pi}\int\psi(\bu, \bv)\, \frac{\nabla W(\bu,\bv)}{|\bv|}\cdot \omega\,  \mathrm{d}\omega\\
 &\qquad=\frac{\psi(\bu, 0)}{4\pi}\int \omega\cdot\mathrm{Hess}(W)(\bu, 0)\omega \,\mathrm{d}\omega\\
 &\qquad=\tfrac13 f(\bu) \Delta_{\bv}W(\bu, 0)\,,
\end{align*}
where we have used that
\begin{equation*}
 \frac{1}{4\pi}\int \omega_i\omega_j \,\mathrm{d}\omega = \tfrac13 \delta_{ij}\,.
\end{equation*}
For the first term in~\eqref{eq:truc} we use the explicit form
\begin{equation*}
 \mathrm{Hess}(\psi)_{ij}=f(\bu)e^{|\bv|/4}\left(\frac{v_iv_j}{16|\bv|^2} - \frac{v_i v_j}{4|\bv|^3}+\frac{\delta_{ij}}{4|\bv|} \right)\,,
\end{equation*}
to obtain
\begin{align*}
 &\lim_{|\bv|\to 0} \frac{|\bv|}{4\pi} \int 4\tr(\mathrm{Hess}(W)\mathrm{Hess}(\psi)) \mathrm{d}\omega\\
 &= f(\bu)\Delta_{\bv}W(\bu, 0)\\
 &\qquad- \lim_{|\bv|\to 0} \frac{f(\bu)}{4\pi}\int \omega\cdot\mathrm{Hess}(W)(\bu, \bv) \omega\, \mathrm{d}\omega\\
 &=\tfrac23 f(\bu) \Delta_{\bv}W(\bu, 0)\,.
\end{align*}

Since $\Delta_{\bv}W(\bu, 0)=\tfrac12 \Delta V(\bu)$ this adds up to the conclusion that
\begin{equation*}
 \lim_{|\bv|\to 0} \frac{|\bv|}{4\pi}\int (H_0^2W\psi)(\bu, \bv)\,\mathrm{d}\omega=\tfrac23 f(\bu) \Delta V(\bu)=0.
\end{equation*}
Since the function $f(\bu)$ was arbitrary, this means that $V$ must be harmonic. Since it was also assumed to be bounded, $V$ is constant.\qed

\section{Proof of Theorem~\ref{thm:Coulomb}}\label{sec:proof_Coulomb}
In this section we prove Theorem~\ref{thm:Coulomb} by studying precisely for which $k$ we have $D(H_0^k)=D(H_V^k)$. From Theorem~\ref{thm:Helium} we know that this does not hold for $k\geq 4$. 
To obtain sharper results, we will use the concept of the domain of a half-integer power of an operator. For the case we treat here, the following may serve as a definition. The domain $D(T^{1/2})$ consists of those square-integrable $\Psi(\bX)$ for which
\begin{equation*}
 \int_{\R^{3N}} |\nabla_\bX \Psi(\bX)|^2\mathrm{d}\bX = \int_{\R^{3N}} |\mathbf{K}|^2 |\widehat{\Psi}(\mathbf{K})|^2 \mathrm{d}\mathbf{K}
 \end{equation*}
is finite. 
We have $D(T^{1/2})=D(H_0^{1/2})=D(H_V^{1/2})$ and define recursively $D(H_0^{k+1/2})$ to be those $\Psi\in D(H_0^{1/2})$ such that $H_0 \Psi \in D(H_0^{j+1/2})$ for every $j<k$, and analogously for $H_V$.

\begin{lemma}\label{lem:DH^3}
Let $H_0$ be the Hamiltonian with Coulomb interaction $w(\bx-\by)=1/|\bx-\by|$ of Theorem~\ref{thm:Coulomb} and let $V(\bx)$ be a function with four bounded derivatives on $\R^3$. If $\Psi$ is symmetric or anti-symmetric, then
\begin{equation}\label{eq:ad^2}
 \bigg(\sum_{j=1}^N \big[H_0, [H_0, V(\bx_j)]\big]\Psi\bigg)(\bx_1, \dots, \bx_N)
\end{equation}
is square-integrable for every $\Psi\in D(H_0)$ and we have $D(H_0^{3})=D(H_V^{3})$.

If $\Psi$ is anti-symmetric, we additionally have that $|\bx_i-\bx_j|^{-2}\Psi(\bx_1, \dots, \bx_N)$ is square-integrable and $\Psi \in D(T^{3/2})$ for every $\Psi\in D(H_0^{3/2})$, as well as $D(H_0^{7/2})=D(H_V^{7/2})$.
\end{lemma}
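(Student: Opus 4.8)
Throughout write $A=\sum_{j=1}^{N}V(\bx_j)$ and $W=\sum_{1\le i<j\le N}|\bx_i-\bx_j|^{-1}$, so that $H_V=H_0+A$, $H_0=T+\sum_jV_0(\bx_j)+W$, and $D(H_0)=D(T)$ by Kato's bound. By a direct calculation, in the style of Appendix~\ref{sec:proof_Helium} and Lemma~\ref{lem:condition_V_preserve}, the identities $D(H_0^{3})=D(H_V^{3})$ and $D(H_0^{7/2})=D(H_V^{7/2})$ reduce to checking that multiplication by $A$ maps $D(H_0^{k})$ into $D(H_0^{k-1})$ for the relevant low (half-)integers $k$ -- the reverse inclusions being obtained by the same computation with $H_0$ and $H_V$ exchanged. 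Since $V$ has bounded derivatives, commuting $A$ past $H_0$ produces the first-order differential operator with bounded smooth coefficients
\begin{equation*}
B:=[H_0,A]=\sum_j\big[-\Delta_{\bx_j},V(\bx_j)\big]=-\sum_j\Big((\Delta V)(\bx_j)+2(\nabla V)(\bx_j)\cdot\nabla_{\bx_j}\Big),
\end{equation*}
and a second commutation produces $[H_0,B]=\sum_j[H_0,[H_0,V(\bx_j)]]$, the object appearing in~\eqref{eq:ad^2}.

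The first point is that $[H_0,B]$ maps $D(H_0)=D(T)$ into $L^2$. As the multiplication parts of $H_0$ commute with $A$, one has $[H_0,B]=[T,B]+[\sum_jV_0(\bx_j),B]+[W,B]$. Here $[T,B]$ is a second-order differential operator with bounded smooth coefficients (they involve derivatives of $V$ up to fourth order), hence maps $H^2\supset D(T)$ into $L^2$, and $[\sum_jV_0(\bx_j),B]=2\sum_j(\nabla V)(\bx_j)\cdot(\nabla V_0)(\bx_j)$ is a bounded multiplication operator. The dangerous-looking term is the multiplication operator $[W,B]=-2\sum_{j}\sum_{l\ne j}(\nabla V)(\bx_j)\cdot(\bx_j-\bx_l)|\bx_j-\bx_l|^{-3}$, a single summand of which behaves like $|\bx_j-\bx_l|^{-2}$ near the coincidence set and is \emph{not} relatively $T$-bounded. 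Pairing the summands $(j,l)$ and $(l,j)$, however,
\begin{equation*}
[W,B]=-2\sum_{j<l}\big((\nabla V)(\bx_j)-(\nabla V)(\bx_l)\big)\cdot\frac{\bx_j-\bx_l}{|\bx_j-\bx_l|^{3}},
\end{equation*}
and since the Hessian of $V$ is bounded, $|(\nabla V)(\bx_j)-(\nabla V)(\bx_l)|\le C|\bx_j-\bx_l|$, so $[W,B]$ is dominated pointwise by $C\sum_{j<l}|\bx_j-\bx_l|^{-1}$, which is relatively $T$-bounded by Kato. Thus $[H_0,B]\colon D(H_0)\to L^2$, proving~\eqref{eq:ad^2}. (We work in the symmetric or anti-symmetric sector of Theorem~\ref{thm:Coulomb}, but the cancellation here is purely algebraic in the pair-sum.) Feeding this into $[H_0^2,A]=[H_0,B]+2BH_0$ gives, for $\Psi\in D(H_0^{3})$,
\begin{equation*}
H_0^{2}(A\Psi)=A\,H_0^{2}\Psi+2B\,H_0\Psi+[H_0,B]\Psi\in L^2
\end{equation*}
(using that $A$ is bounded, $B$ is first order and $H_0\Psi\in D(T)\subset H^2$, and $[H_0,B]\Psi\in L^2$); together with the easier cases $k=1,2$ this yields $D(H_0^{3})=D(H_V^{3})$.

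For the anti-symmetric statements I would first prove: $|\bx_i-\bx_j|^{-2}\Psi\in L^2$ and $\Psi\in D(T^{3/2})$ for every anti-symmetric $\Psi\in D(H_0^{3/2})$. Anti-symmetry forces $\Psi$ to vanish on each two-particle coincidence hyperplane $\{\bx_i=\bx_j\}$, so in the relative variable $\bv=\bx_i-\bx_j$ it vanishes linearly at $\bv=0$. Bootstrapping $-\Delta_\bX\Psi=H_0\Psi-\big(\sum_jV_0(\bx_j)+W\big)\Psi$ -- where the extra half-derivative contained in $H_0\Psi\in D(T^{1/2})$ together with a Hardy/Rellich-type inequality for functions with such linear vanishing absorb the singular term $(\nabla W)\Psi\sim|\bx_i-\bx_j|^{-2}\Psi$ -- gives both claims, $\Psi\in H^3=D(T^{3/2})$ included. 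Granting this, $D(H_0^{7/2})=D(H_V^{7/2})$ is obtained by repeating the commutator bookkeeping of the previous paragraph one half-power higher: recalling $D(H_0^{5/2})=\{\Phi\in D(T^{1/2})\colon H_0\Phi,\,H_0^{2}\Phi\in D(T^{1/2})\}$, one checks $\nabla_\bX H_0^{2}(A\Psi)\in L^2$, and the only contributions not already controlled above are of the type $|\bx_i-\bx_j|^{-2}\Psi$ or involve third derivatives of $\Psi$, all square-integrable by the anti-symmetric regularity just established. This is exactly the half-power of room that is unavailable in the symmetric sector (cf.\ Theorem~\ref{thm:Helium} and the ``$5$ versus $6$ time-derivatives'' dichotomy).

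The main obstacle is this anti-symmetric regularity improvement. In three dimensions $|\bx|^{-2}$ is \emph{not} bounded from $D(-\Delta)=H^2$ into $L^2$, even on functions vanishing at the origin -- the Rellich inequality fails for $n=3$ -- so one cannot exploit anti-symmetry at the level of $D(H_0)$ alone; the proof must use the additional half-derivative of $D(H_0^{3/2})$ together with the linear (Pauli) vanishing on coincidence hyperplanes, e.g.\ via a near-coincidence expansion of $\Psi$ in the spirit of Kato's cusp analysis or a weighted-Sobolev estimate. Once this is settled, the rest -- the iterated-commutator bookkeeping and the verification that the assumed numbers of bounded derivatives of $V$ and $V_0$ suffice at each step -- is routine.
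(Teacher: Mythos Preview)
Your treatment of the double commutator and of $D(H_0^{3})=D(H_V^{3})$ is correct and coincides with the paper's argument: the pairwise cancellation $(\nabla V)(\bx_j)-(\nabla V)(\bx_\ell)$ tames the apparent $|\bx_j-\bx_\ell|^{-2}$ singularity of $[W,B]$ down to a Coulomb-type $|\bx_j-\bx_\ell|^{-1}$, which is Kato-bounded.

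Your diagnosis of the anti-symmetric step, however, is off, and this is where the actual content of the proof lies. You assert that $|\bx|^{-2}$ is not bounded from $H^2$ into $L^2$ even on functions vanishing at the origin, and conclude that one must borrow the extra half-derivative from $D(H_0^{3/2})$ (via cusp analysis or weighted Sobolev estimates) to control $|\bx_i-\bx_j|^{-2}\Psi$. In fact the paper shows that no such extra regularity is needed: for an anti-symmetric $\Psi\in D(T)=H^2$ one already has $|\bx_i-\bx_j|^{-2}\Psi\in L^2$. The mechanism is not mere vanishing at the coincidence set (which would indeed be too weak) but \emph{oddness} in the relative variable $\by=\bx_i-\bx_j$. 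Writing $f(\by)$ for the restriction to this variable, oddness gives the exact representation
\begin{equation*}
f(\by)=\tfrac{\by}{2}\cdot\int_{-1}^{1}(\nabla f)(t\by)\,{\rm d}t,
\end{equation*}
so $|f(\by)|/|\by|$ is controlled by an average of $|\nabla f|$. One Minkowski inequality, one application of Hardy's inequality in $\R^3$ to each $\partial_k f$, and a rescaling in $t$ then give
\begin{equation*}
\int_{\R^3}\frac{|f(\by)|^2}{|\by|^{4}}\,{\rm d}\by\;\le\;C\sum_{k,\ell}\int_{\R^3}|\partial_{y_k}\partial_{y_\ell}f(\by)|^{2}\,{\rm d}\by,
\end{equation*}
purely at the $H^2$ level. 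The additional half-derivative in $D(H_0^{3/2})$ is used only \emph{afterwards}, to upgrade $\Psi$ to $D(T^{3/2})$: one writes $T\Psi=H_0\Psi-\sum_jV_0(\bx_j)\Psi-W\Psi$ and checks $\nabla(W\Psi)\in L^2$, the dangerous piece $(\nabla W)\Psi$ being controlled by the estimate just proved. With these two facts in hand, your outline for $D(H_0^{7/2})=D(H_V^{7/2})$ is correct and matches the paper: one differentiates the double commutator once more and bounds the resulting $\nabla_{\bx_\ell}\big(|\bx_\ell-\bx_k|^{-3}(\bx_\ell-\bx_k)\cdot(\nabla V(\bx_\ell)-\nabla V(\bx_k))\big)\Psi$ using the $|\bx_\ell-\bx_k|^{-2}$ bound and $\Psi\in D(T^{3/2})$.
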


This lemma tells us that for a sufficiently smooth potential $V(\bx)$, we have in the symmetric case $D(H_0^{3})=D(H_V^{3})$. In Theorem~\ref{thm:Helium} we have shown that $D(H_0^{4})\neq D(H_V^{4})$, except if $V$ is constant. However, the same proof also shows that $D(H_0^{7/2})\neq D(H_V^{7/2})$. In the anti-symmetric case, we have $D(H_0^{7/2})= D(H_V^{7/2})$ and a reasoning similar to that of Theorem~\ref{thm:Helium} shows that $D(H_0^{4})\neq D(H_V^{4})$ as well.

\begin{proof}
 First, calculate the commutator
 \begin{align}
  &\sum_{j=1}^N \big[H_0, [H_0, V(\bx_j)]\big]\label{eq:double_comm}\\
  &\quad=\sum_{j=1}^N \Big(\sum_{k=1}^N \big[-\Delta_{\bx_k} + V_0(\bx_k), [-\Delta_{\bx_j}, V(\bx_j)]\big]\notag\\
  &\quad\quad   +\sum_{1\leq \ell<k\leq N} \big[|\bx_\ell-\bx_k|^{-1}, [-\Delta_{\bx_j}, V(\bx_j)]\big]\Big)\notag
  \,.
 \end{align}
When the first term acts on $\Psi(\bX)$ the result is square-integrable, because $\Psi\in D(T)$ has two square-integrable derivatives. The second commutator equals
\begin{align*}
 &\big[|\bx_\ell-\bx_k|^{-1}, [-\Delta_{\bx_j}, V(\bx_j)]\big]\\
 &\qquad=2\,\nabla_{\bx_j}|\bx_\ell-\bx_k|^{-1}\cdot\nabla_{\bx_j}V(\bx_j)\\
 &\qquad=\nabla_{\bx_j}V(\bx_j)\cdot\Big(\delta_{j\ell} \frac{\bx_\ell-\bx_k}{|\bx_\ell-\bx_k|^3}
 -\delta_{jk} \frac{\bx_\ell-\bx_k}{|\bx_\ell-\bx_k|^3}\Big)\,.
\end{align*}
Taking the sum over $j$ then gives
\begin{equation*}
 \frac{\bx_\ell-\bx_k}{|\bx_\ell-\bx_k|^3}\cdot\Big(\nabla V(\bx_\ell)-\nabla V(\bx_k)\Big)\,.
\end{equation*}
Since $V(\bx)$ has bounded second derivatives, the function above is smaller in modulus than $|\bx_\ell-\bx_k|^{-1}$ times a bounded function. This implies that $\Psi\in D(T)$ multiplied by this function is square-integrable.
Thus the action of~\eqref{eq:double_comm} on $\Psi\in D(H_0^3)$ yields a square-integrable function, which implies $\Psi\in D(H_V^3)$ as argued in the proof of Theorem~\ref{thm:Helium}.
It is obvious from these calculations that the same holds for the double commutator of $\sum_{j=1}^N V(\bx_j)$ with $H_V$ and we conclude that $D(H_0^{3})=D(H_V^{3})$.

Now let $\Psi(\bX)$ be anti-symmetric, so that it vanishes if $\bx_i=\bx_j$. 
We will show that then $|\bx_i-\bx_j|^{-2}\Psi(\bX)$ is square-integrable for $\Psi\in D(T)$.
To see why this implies that any $\Psi\in D(H_0^{3/2})$ is an element of $D(T^{3/2})$, first write $T=H_0 - \sum_{j=1}^N V_0(\bx_j) - \sum_{1\leq j<k\leq N} |\bx_j-\bx_k|^{-1}$.
Then note that $H_0\Psi\in D(T^{1/2})$, by definition, and $\sum_{j=1}^N V_0(\bx_j)\Psi\in D(T^{1/2})$, by regularity of $V_0(\bx)$.
We are left with the interaction term, and it remains to check that the partial derivatives of $\sum_{1\leq j<k\leq N} |\bx_j-\bx_k|^{-1}\Psi(\bX)$ are square-integrable. The derivatives of $\Psi(\bX)$ define elements of $D(T^{1/2})$, and multiplication by $|\bx_j-\bx_k|^{-1}$ yields a square-integrable function (see~\cite[p.~169]{ReeSim2}). Finally, if a derivative acts on $|\bx_j-\bx_k|^{-1}$, the resulting term is bounded in modulus by a constant times $|\bx_j-\bx_k|^{-2}\Psi(\bX)$.

Now, back to the square-integrability of $|\bx_i-\bx_j|^{-2}\Psi(\bX)$. 
%
%
%
%
%
We will prove this by showing an inequality for a function $f(\by)$ of the relative coordinate $\by=\bx_i-\bx_j$, which can then be integrated over the remaining coordinates to obtain square-integrability of $|\bx_i-\bx_j|^{-2}\Psi(\bX)$.
Using anti-symmetry, we may write $f(\by)=\frac{\by}{2}\cdot \int_{-1}^1 (\nabla f)(t\by) {\rm d}t$, and the Minkowski inequality~\cite[Theorem~2.4]{LieLos-01} gives
\begin{align*}
\int_{\R^3} \frac{|f(\by)|^2}{|\by|^4} \dy
&\leq \int_{\R^3} \frac{1}{4|\by|^2} \left|\int_{-1}^1 (\nabla f)(t\by) {\rm d}t \right|^2 \dy\\
&\leq \left(\int_{-1}^1 \left(\int_{\R^3} \frac{|(\nabla f)(t\by)|^2}{4|\by|^2}\dy \right)^{1/2}\dt\right)^{2}\,.
\end{align*}
Hardy's inequality~\cite[p.~169]{ReeSim2} implies
\begin{equation*}
 \int_{\R^3} \frac{|(\nabla f)(t\by)|^2}{4|\by|^2}\dy\leq \sum_{k,\ell=1}^3 \int_{\R^3} t^2 |(\partial_{y_k}\partial_{y_\ell}f)(t\by)|^2\dy\,.
\end{equation*}
Changing variables $\bz=t\by$ and performing the $t$-integral then gives
\begin{align*}
 \int_{\R^3} \frac{|f(\by)|^2}{|\by|^4} \dy&\leq 4
 \int_{\R^3}\sum_{k,\ell=1}^3 |(\partial_{y_k}\partial_{y_\ell}f)(\bz)|^2 \dz \,.
\end{align*}
This shows that $|\by|^{-2}f(\by)$ is square-integrable if $f(\by)$ has two square-integrable derivatives, and thus $|\bx_i-\bx_j|^{-2}\Psi(\bX)$ is square-integrable for $\Psi \in D(T)$.

To prove the inclusion of $D(H_0^{7/2})$ in $D(H_V^{7/2})$, it remains to show that $H_0^2 H_V \Psi\in D(T^{1/2})$ for all $\Psi\in D(H_0^{7/2})$. By the arguments in the proof of Theorem~\ref{thm:Helium} this is equivalent to showing that
\begin{equation}\label{eq:ad^2_fermion}
 \nabla_{\bx_i} \sum_{j=1}^N \big[H_0, [H_0, V(\bx_j)]\big]\Psi(\bX)
\end{equation}
is square-integrable for any $i\leq N$. From the formulas for the commutator we see that the only non-trivial terms are
\begin{equation*}
 \nabla_{\bx_\ell}\frac{\bx_\ell-\bx_k}{|\bx_\ell-\bx_k|^3}\cdot \Big(\nabla V(\bx_\ell)-\nabla V(\bx_k)\Big)\Psi(\bX)\,,
\end{equation*}
but these can be controlled by combining all the bounds we have just discussed.
Again, the argument for the converse inclusion is the same.
\end{proof}
Coming back to the proof of Theorem~\ref{thm:Coulomb}, we first need to show that
 \begin{align*}
  f(t)&=\int_{\R^3}\rho(t,\bx)\,\phi(\bx)\,\dx\\
  &=N\langle \Psi(t)\vert \varphi(\bx_1) \Psi(t)\rangle
 \end{align*}
 has five (resp. six if $\Psi_0$ is anti-symmetric) derivatives for $V(t,\bx)$ satisfying the conditions on $V_m$, $m=1,2$ of Theorem~\ref{thm:Coulomb}. Since $\Psi_0\in D(H_0^4)$, Lemma~\ref{lem:DH^3} implies that $\Psi_0\in D(H_V^3)$, so we immediately have existence of three derivatives of $f(t)$. Using the (anti-)symmetry of $\Psi_0$, the third derivative equals
\begin{align*}
  &\frac{{\rm d}^3}{{\rm d} t^3} f(t)=N\Big\langle \Psi(t)\Big\vert -\big[\dot{V}(t,\bx_1),[H_0,\varphi(\bx_1) ]\big]\Psi(t)\Big\rangle\\
  & +i \sum_{j=1}^N \bigg( \Big\langle H_V \Psi(t)\Big\vert -\big[H_V,[H_V, \phi(\bx_j)]\big] \Psi(t)\Big\rangle\\
  &\qquad - \Big\langle \Psi(t)\Big\vert -\big[H_V,[H_V, \phi(\bx_j)]\big]H_V \Psi(t)\Big\rangle\bigg)\,.
\end{align*}
Let $A$ denote the operator 
\begin{equation*}
 A=-\sum_{j=1}^N \big[H_V,[H_V, \varphi(\bx_j)]\big]=-\sum_{j=1}^N \big[H_V,[H_0, \phi(\bx_j)]\big].
\end{equation*}
$A$ is a symmetric operator, so we can  write
\begin{align}\label{eq:third_deriv_weak}
 \frac{{\rm d}^3}{{\rm d} t^3} f(t)
 =&-2 \Im\Big(  \langle H_V \Psi(t)\vert A \Psi(t)\rangle\Big)\\
 &-2N\Big\langle \Psi(t)\Big\vert \big(\nabla \dot{V}(\bx_1)\cdot \nabla \phi(\bx_1) \big) \Psi(t)\Big\rangle\notag
\end{align}
Equation~\eqref{eq:ad^2} (with $V=\varphi$), together with differentiability of $V$, implies that $A$ maps functions in $D(T)$ to square-integrable functions. The same obviously holds for the first four time-derivatives of $A$. This shows that
$A\Psi(t)$ has two time-derivatives which are square-integrable functions.
Consequently, the expression~\eqref{eq:third_deriv_weak} can be differentiated twice in time, and $f(t)$ five times. 

Those terms in the fifth derivative of $f(t)$ that involve a time-derivative of $V$ are once again differentiable, by the argument above. Using the symmetry of $A$, the remaining terms can be brought into the form
\begin{equation}\label{eq:sixth_deriv}
 2 \Im\left(\langle H_V^3 \Psi(t)\vert A \Psi(t)\rangle -3\langle H_V^2 \Psi(t)\vert A H_V \Psi(t)\rangle\right)\,.
\end{equation}
The second term is clearly differentiable because $\Psi(t)\in D(H_0^3)$.
To treat the first term of~\eqref{eq:sixth_deriv}, note that, if $\Psi_0$ is anti-symmetric, $A \Psi_0$ is an element of $D(T^{1/2})=D(H_V^{1/2})$ by~\eqref{eq:ad^2_fermion} and we can write
\begin{align*}
 &\langle H_V^3 \Psi(t)\vert A \Psi_0\rangle\\
 &=\langle  (T+1)^{-1/2}H_V^3\Psi(t)\vert(T+1)^{1/2} A \Psi_0\rangle\,.
\end{align*}
The function $(T+1)^{-1/2}H_V^3\Psi(t)$ is differentiable in $t$ with square-integrable derivative, as follows from the argument of Theorem~\ref{thm:regularity} applied to the domain $D(H_0^{7/2})$ of $H_{V(t)}$ in the Hilbert space $D(H_0^{5/2})$.
Using this, one easily sees that the difference quotient for~\eqref{eq:sixth_deriv} has a limit, in the same way one proves the product-rule for the derivative. This shows that~\eqref{eq:sixth_deriv} is differentiable, and hence the sixth derivative of $f(t)$ can be taken.

We have thus shown that we can take the first five (resp. six) derivatives that were used in the proof of Theorem~\ref{thm:RG_smooth} and we can follow that proof up to this order.
The argument goes through essentially unchanged, though there are some subtleties regarding the regularity of the functions involved that we comment on below.

First of all, it is important to remark that $\rho(0,\bx)$ is a continuous function, although $\Psi_0$ might not be, because 
 the evaluation of $\Psi_0(\bx_1, \dots, \bx_N)$ at $\bx_1=\bx$ is a square-integrable function of $\bx_2,\dots, \bx_N$ that depends continuously on $\bx$ (this follows from~\cite[Theorem~IX.38]{ReeSim2}).
 Hence, the connected components of the set $\{\rho(0,\bx)>0\}$ are open, and
\begin{equation*}
 \int \rho(0,\bx) |\nabla (V_1-V_2)|^2(\bx) {\rm d}\bx=0
\end{equation*}
implies that $V_1(\bx)-V_2(\bx)$ is constant on every component.

Finally we need to show that for $k\leq 3$ (respectively $k\leq 4$ in the fermionic case)
\begin{multline*}
 \frac{{\rm d}^{k+2}}{{\rm d} t^{k+2}} \left(f_1(t)-f_2(t)\right)\Big\vert_{t=0}\\
 =-2N\Big\langle \Psi_0\Big\vert \big(\nabla \partial_t^{k}(V_1-V_2)(\bx_1)\vert_{t=0}\cdot \nabla \phi(\bx_1) \big) \Psi_0\Big\rangle\,,
\end{multline*}
where $f_m(t)$ denotes the function $f(t)$ with $V=V_m$, using that for $l<k$
\begin{equation*}
 \int_{\R^3} |\nabla \partial_t^\ell(V_1-V_2)| \rho(0,\bx) \dx=0\,.
\end{equation*}
In the proof of Theorem~\ref{thm:RG_smooth} this was shown using smoothness of $\Psi_0$, which is no longer given.
We will now discuss how to adapt these arguments for the fermionic case and $k=4$, which is the most difficult case.

The term with no `explicit' time-derivatives in $f^{(6)}_m$, i.e. the term without time-derivatives on $V_m$, is (cf.~\eqref{eq:sixth_deriv})
\begin{multline*}
g_m(t):=
2 \Re \big\{\langle H^4_{V_m} \Psi | A_m \Psi \rangle -4 \langle H^3_{V_m} \Psi | A_m H_{V_m} \Psi \rangle \\ + 3  \langle H^2_{V_m} \Psi | A_m H^2_{V_m} \Psi \rangle
\big\}.
\end{multline*}
Here $A_m$ denotes the operator $A$ with $V=V_m$ and the term $\langle H^4_{V_m} \Psi | A_m \Psi \rangle$ has to be understood in the sense of the pairing between $A_m\Psi(t)\in D(T^{1/2})$ and the distribution $H^4_{V_m} \Psi(t)$, as was done in showing existence of the sixth derivative above.
The other pairings are standard scalar products.

We will only analyze $g_m$ in detail and show that $g_1(0)-g_2(0)=0$. The other terms in $f^{(6)}$ have explicit time derivatives and therefore fewer operators. Therefore all scalar products are immediately understandable as integrals and also the algebra is slightly simpler---due to the fewer factors. However, the basic idea of the calculation is the same.

Write $W = V_2 - V_1$ and observe that since $\int \rho(0,\mathbf{x})| \nabla W|(\mathbf{x}) d\mathbf{x} = 0 $, we have
$$
\sum_{j=1}^N |\nabla W |(x_j) \Psi_0 = 0.
$$
Actually, even more is true, namely
\begin{align}\label{eq:vanishing-3}
\sum_{j=1}^N \left( \Big|  |\nabla W |(x_j) H_{V_1}^s\Psi_0 \Big| + \Big|  |\nabla W |(x_j) \nabla H_{V_1}^s\Psi_0 \Big|\right)= 0.
\end{align}
with $s \in \{1,2,3\}$. Of course, a similar identity holds for $V_2$, and one can even have mixed products of the two operators. 
As in the the proof of Theorem~\ref{thm:RG_smooth}, these identities follow from the fact that $H_{V_1}$ is a {\it local} operator, so that $H_{V_1}^s \Psi_0 = 0$ on the open set where $\nabla W$ does not vanish. Here we used that $H_{V_1}^s \Psi_0$ (and $\nabla H_{V_1}^s \Psi_0$) defines a function (in contrast to a distribution).

Similarly to~\eqref{eq:vanishing-3} we have
\begin{align}\label{eq:SameA-3}
A_1 \Phi = A_2 \Phi,
\end{align}
and
\begin{align}\label{eq:SameA2-3}
[W,A_m] \Phi = 0,
\end{align}
for $\Phi = \Psi_0$ or $\prod_{s=1}^k H_{V_{j_s}} \Psi_0$, with $k\leq 3$ and $j_s \in \{ 1,2\}$.
The equality~\eqref{eq:SameA-3} allows us to replace $A_2$ by $A_1$---which we abbreviate by $A$---everywhere in $g_2(0)$.
Notice also that using \eqref{eq:vanishing-3} we get for $k\leq 4$,
\begin{align*}
H_{V_2}^k\Psi_0=(H_{V_1} +W)^k \Psi_0 = \sum_{j\leq k} \binom{k}{j} W^j H_{V_1}^{k-j} \Psi_0.
\end{align*}

Therefore,
\begin{align*}
&g_2(0)-g_1(0)\\
&=
\begin{aligned}[t]&2\Re\Big\{
4 \langle W H_{V_1}^3 \Psi_0 | A \Psi_0\rangle + 6 \langle W^2 H_{V_1}^2 \Psi_0 | A \Psi_0\rangle\\
&+ 4 \langle W^3 H_{V_1} \Psi_0 | A \Psi_0 \rangle + \langle W^4 \Psi_0 | A \Psi_0 \rangle\\
&- 4 \langle (H_{V_1}^3 + 3W H_{V_1}^2 + 3 W^2 H_{V_1} + W^3) \Psi_0 | A W \Psi_0 \rangle\\
&- 4 \langle (3W H_{V_1}^2 + 3 W^2 H_{V_1} + W^3) \Psi_0 | A H_{V_1} \Psi_0 \rangle \\
&+3 \langle (2 W H_{V_1} + W^2) \Psi_0 | A (H_{V_1}^2 + 2 W H_{V_1} + W^2) \Psi_0 \rangle\\
&+ 3\langle H_{V_1}^2 \Psi_0 | A( (2 W H_{V_1} + W^2) \Psi_0 \rangle  \Big\}.
\end{aligned}
\end{align*}
At this point notice that all involved vectors belong to the Hilbert space $L^2$, i.e. there are no distributions anymore.
A tedious calculation shows that this term vanishes. Let us only argue that the terms with four powers of $W$ cancel, the argument for the others being similar (Notice though, that for the other powers of $W$, we use the fact that we take the real part to reach the conclusion). I.e. we will prove that
\begin{multline*}
0=\Re \Big\{\langle W^4 \Psi_0 | A \Psi_0 \rangle - 4 \langle W^3 \Psi_0 | A W \Psi_0 \rangle\\
+ 3 \langle W^2 \Psi_0 | A W^2 \Psi_0 \rangle \Big\}.
\end{multline*}
But this is easy, since using~\eqref{eq:SameA2-3}, we get
\begin{align*}
&\langle W^4 \Psi_0 | A \Psi_0 \rangle - 4 \langle W^3 \Psi_0 | A W \Psi_0 \rangle + 3 \langle W^2 \Psi_0 | A W^2 \Psi_0 \rangle \nonumber \\
&=(1-4+3)\langle W^2 \Psi_0 | A W^2\Psi_0 \rangle=0.\nonumber
\end{align*}

This concludes the proof of Theorem~\ref{thm:Coulomb}.\qed

%

\end{document}